\documentclass[11pt]{article}

\usepackage[a4paper,margin=1in]{geometry}
\usepackage[T1]{fontenc}
\usepackage[utf8]{inputenc}
\usepackage{lmodern}
\usepackage{microtype}
\usepackage{amsmath,amssymb,amsthm,mathtools}
\usepackage{bm}
\usepackage{booktabs}
\usepackage{array}
\usepackage{tabularx}
\usepackage{longtable}
\usepackage{enumitem}
\usepackage{xcolor}
\usepackage[sort,compress,noadjust]{cite}
\usepackage{algorithm}
\usepackage{algpseudocode}
\usepackage{caption}
\usepackage{makecell}
\usepackage{hyperref}

\definecolor{accent}{HTML}{1F4E79}
\definecolor{lightaccent}{HTML}{EAF2F8}

\newtheorem{theorem}{Theorem}
\newtheorem{lemma}{Lemma}

\newtheorem{fact}{Fact}
\newtheorem{corollary}{Corollary}
\newtheorem{definition}{Definition}
\newtheorem{example}{Example}
\newtheorem{remark}{Remark}

\newcommand{\F}{\mathbb F}
\newcommand{\Z}{\mathbb Z}
\newcommand{\C}{\mathbb{C}}

\newcommand{\ii}{\mathrm i}
\newcommand{\xor}{\oplus}
\newcommand{\ind}[1]{\mathbf 1_{\{#1\}}}
\newcommand{\lookup}[1]{\mathsf{#1}}

\newcommand{\ket}[1]{|#1\rangle}

\title{{\fontfamily{ppl}\selectfont \textbf{Optimal dense materialization of the stabilizer formalism without polynomial overhead}}}
\author{
    Hyunho Cha and Jungwoo Lee\\
    \small NextQuantum and Department of Electrical and Computer Engineering\\
    \small Seoul National University, Seoul 08826, Republic of Korea\\
    \small \texttt{\{ovalavo, junglee\}@snu.ac.kr}
}
\date{}

\begin{document}
\maketitle

\begin{abstract}
Stabilizer states and Clifford transformations constitute the exactly tractable backbone of quantum information science, from error correction and fault tolerance to benchmarking and simulation. Although these objects admit compact classical descriptions, many physical and computational workflows still require their explicit dense forms such as a full wavefunction for a stabilizer state or a full matrix for a Clifford transformation. In such explicit output tasks, exponential scaling is unavoidable because the outputs themselves have sizes $2^n$ and $4^n$. The fundamental question is therefore whether compact stabilizer and Clifford descriptions can be expanded with no additional polynomial overhead. Here we answer this question affirmatively. We present optimal algorithms that materialize an $n$-qubit stabilizer state vector in $O(2^n)$ time and a full dense Clifford matrix in $O(4^n)$ time. The same framework also yields an optimal conversion from standard stabilizer check matrices to state vectors and, for every fixed odd prime qudit dimension \(\ell\), gives \(O(\ell^n)\)-time materialization of qudit stabilizer states.
As an additional compact-to-compact result, we design a sign-aware Four Russians method for converting stabilizer check matrices to quadratic forms faster than Gaussian elimination.
These results close the asymptotic gap between compact descriptions of the stabilizer formalism and their dense representations.
\end{abstract}

\section{Introduction}
\label{sec:intro}

A central difficulty in quantum many-body physics is that the natural
Hilbert space description of an \(n\)-qubit system has dimension \(2^n\),
whereas many physically important structures admit much smaller algebraic
descriptions. Stabilizer states and Clifford transformations are among the most
important exactly tractable objects in this landscape. A stabilizer state is
the unique common \(+1\) eigenstate of an abelian group of Pauli constraints,
and therefore specifies a quantum state by its symmetries rather than by its
\(2^n\) complex amplitudes. This representation underlies
graph and cluster states in measurement-based quantum computation, code states and syndrome spaces in quantum error correction, and the Clifford dynamics captured by the Gottesman--Knill theorem \cite{shor1995scheme, steane1996error, cleve1997efficient, calderbank1998quantum, gottesman1998heisenberg, raussendorf2001one, schlingemann2001quantum, schlingemann2002stabilizer, hein2004multiparty, van2004graphical}. Modern stabilizer
simulators and tableau methods exploit precisely this structure to avoid
ever writing the full vector when the computation remains
inside the stabilizer sector~\cite{aaronson2004improved, anders2006fast, nest2010classical, gidney2021stim}.
\color{black}

The same compactness, however, creates a representation problem. An
\(n\)-qubit stabilizer state or Clifford unitary can be stored using only
polynomially many bits, for example as a check matrix, quadratic form or
tableau. Many calculations that surround stabilizer physics are nevertheless
formulated in the computational basis. Exact diagonalization routines,
dense linear algebra libraries, process matrix comparisons, simulator
validation, and studies of non-Clifford perturbations all expect
a wavefunction or a unitary matrix \cite{chuang1997prescription, poyatos1997complete, nielsen2002simple, gilchrist2005distance, greenbaum2015introduction, bravyi2016improved, bravyi2016trading, bravyi2019simulation, wietek2025xdiag}. In such settings, the goal is not efficient simulation of a Clifford circuit, but rather the conversion of a compact description into an explicit Hilbert-space object.

This boundary appears naturally in physical workflows where stabilizer
states serve as solvable reference points or calibration objects. A compact stabilizer pipeline may be the most
natural way to generate the object, while a dense vector or matrix may be
the required representation for comparing against a Schr\"odinger simulator,
computing basis-resolved diagnostics, interfacing with a non-stabilizer
routine, or producing exact benchmark data \cite{emerson2007symmetrized, magesan2011scalable, dankert2012exact, smelyanskiy2016qhipster, guerreschi2020intel, suzuki2021qulacs}. The conversion problem is
therefore not a replacement for stabilizer simulation. It is the interface
between the compact stabilizer formalism and the dense representations used
elsewhere in quantum physics.

For such explicit output problems, exponential scaling is unavoidable because
the requested physical object is itself exponentially large. A state vector
contains \(2^n\) amplitudes and a dense \(n\)-qubit unitary contains
\(4^n\) matrix entries. Thus the relevant question is not whether the
conversion can be polynomial time, but whether it can be output-size optimal:
can a compact stabilizer or Clifford description be expanded with no
additional factor polynomial in \(n\)?

This gap was visible in standard software pipelines, where comparable routes in libraries such as Qiskit \cite{javadi2024quantum} and Stim \cite{gidney2021stim} either did not expose the relevant materialization map directly or had worst-case costs with an \(O(n^2)\) overhead over the output size.
De Silva et al. showed that Gray code enumeration is a natural way to generate
the amplitudes of a stabilizer state from a quadratic form description
\cite{de2025fast}. That approach updates the support label in
constant time, but the phase update still scans a linear number of support
coordinates at each step. In the worst case this gives
\(O(n2^n)\) time for producing a \(2^n\)-entry state vector.
The analogous dense Clifford matrix conversion has target size \(4^n\), and the
same issue asks whether a compact tableau can be expanded at exactly that
rate.

\begin{figure}
    \centering
    \includegraphics[width=\linewidth]{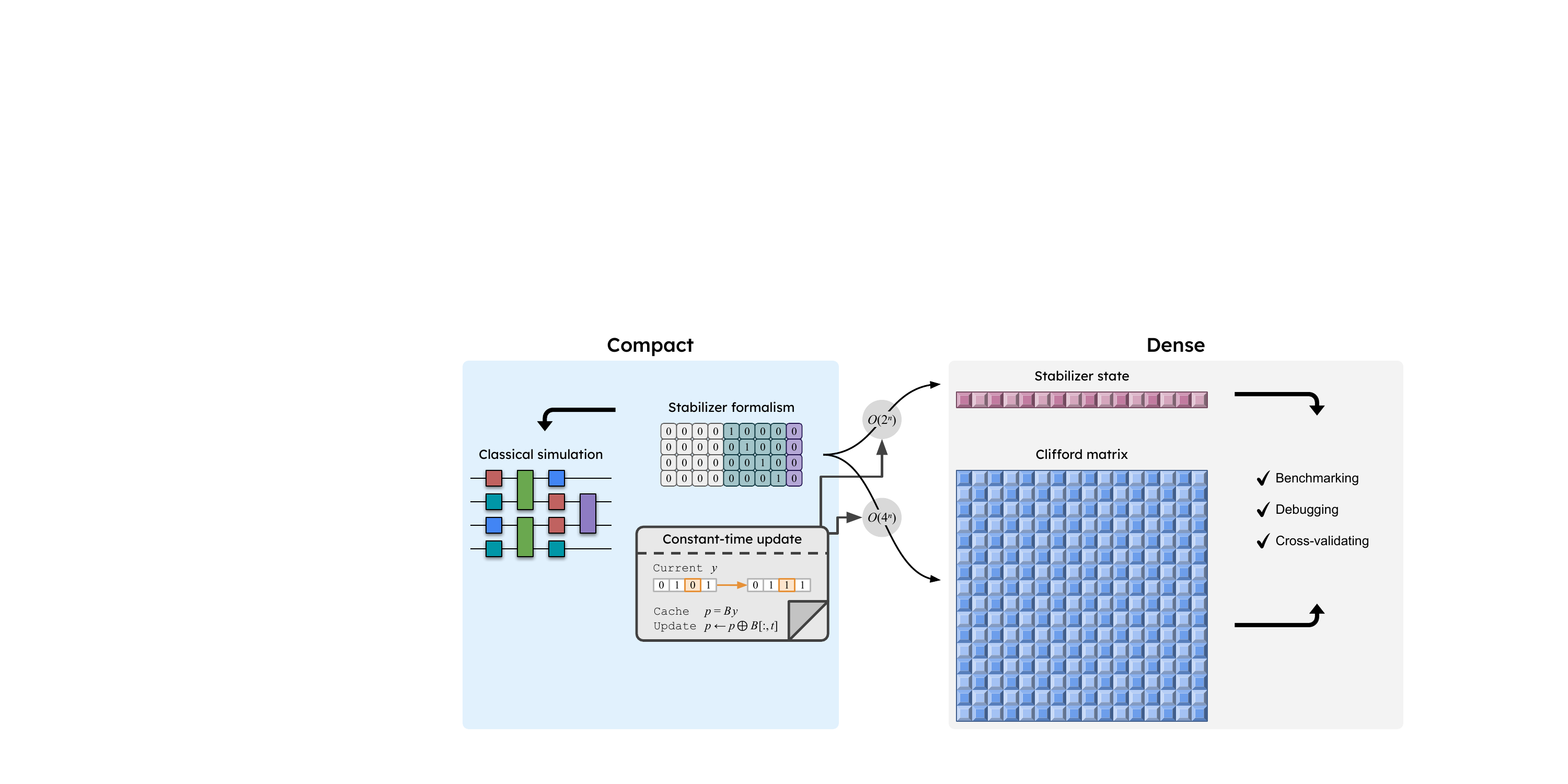}
    \caption{Overview of our dense materialization results. Compact descriptions are expanded into a dense stabilizer state vector in optimal \(O(2^n)\) time and into a dense Clifford matrix in optimal \(O(4^n)\) time. The essential part is the Gray code traversal update that caches the parity word.}
    \label{fig:schematic}
\end{figure}

Here we close this asymptotic gap. The underlying observation is that the
quadratic phase of a stabilizer state can be viewed as a binary interaction
graph on the coordinates of its affine support. During a Gray code
walk, only one support coordinate flips. If \(B\) denotes the off-diagonal
interaction matrix and \(y\) is the current Gray word, the word
\(p=By\) stores, simultaneously for every possible next flip, the parity of
the neighbors already active. The next phase increment is therefore
determined by one bit of \(y\) and one bit of \(p\). After the flip, all
future increments are refreshed by the single word operation
\(p\leftarrow p\oplus B[:,t]\). Thus the algorithm does not recompute a local
phase derivative at each step. It dynamically maintains all of them.

This invariant yields an \(O(2^n)\)-time algorithm for materializing a full
stabilizer state vector from a quadratic form description, matching the
number of amplitudes that must be written. Combined with the conversion from check matrices to quadratic forms, it also provides an output-optimal pipeline for converting check matrices to state vectors. We then extend the same principle to
Clifford gates. The first column of a Clifford matrix is a stabilizer state. The remaining columns can be traversed in Gray code order because adjacent columns differ by the Pauli image of a single-qubit
\(X\) operator. Applying one dense Pauli operator to one dense column costs
\(O(2^n)\), and \(2^n\) columns are produced, giving an \(O(4^n)\)-time
dense Clifford matrix expansion.
Figure~\ref{fig:schematic} illustrates our main dense materialization results.
The algorithm extends to every fixed odd prime qudit dimension $\ell$, and an $n$-qudit stabilizer state of local dimension $\ell$ can be materialized in $O(\ell^n)$ time.

In parallel, we also improve the polynomial preprocessing step that arises when the input is a stabilizer check matrix. In this case, the dense materialization pipeline first performs \(\text{check matrix}\to\text{quadratic form}\) and then applies the optimal dense expansion. Although the second step dominates any polynomial preprocessing in dense-output applications, the preprocessing cost is important in workflows
that keep stabilizer states compact. We design a \emph{sign-aware} variant of the Four Russians method \cite{arlazarov1970economical, gusfield1997algorithms, bard2006accelerating, albrecht2008efficient, bard2008matrix, albrecht2011efficient}, compatible with the sign convention of check matrices, that reduces the cost of this conversion from \(O(n^3)\) to \(O(n^3/\log n)\) bit operations.
Table~\ref{tab:complexity-comparison} summarizes the time complexity of the proposed algorithms.

\begin{table}[t]
\centering
\caption{Comparison of time complexities for stabilizer state and Clifford operator conversions.}
\label{tab:complexity-comparison}
\begin{tabular}{>{\raggedright\arraybackslash}p{0.42\linewidth} >{\raggedright\arraybackslash}p{0.28\linewidth} >{\raggedright\arraybackslash}p{0.20\linewidth}}
\toprule
Task & De Silva et al. \cite{de2025fast} & Ours \\
\midrule
Quadratic form $\to$ state vector & $O(n2^n)$ & $O(2^n)$ \\
Check matrix $\to$ state vector & $O(n2^n)$ & $O(2^n)$ \\
Clifford tableau $\to$ unitary matrix & $O(n4^n)$ & $O(4^n)$ \\
Check matrix $\to$ quadratic form & $O(n^3)$ & $O(n^3/\log n)$ \\
\bottomrule
\end{tabular}
\end{table}

\section{Preliminaries}

\subsection{Computational basis states and words}

We write $\F_2 = \{0,1\}$ for the field with two elements. Addition in $\F_2$ is XOR, denoted by $\xor$. If $x,y\in \F_2^n$, then $x\xor y$ means bitwise XOR.

Each $n$-bit string $x=(x_1,\dots,x_n)\in \F_2^n$ labels one computational basis state $\ket{x}$ of $n$ qubits. Thus an arbitrary state vector has the form
\[
\ket{\psi} = \sum_{x\in \F_2^n} \psi_x \ket{x},
\]
where the complex number $\psi_x$ is the amplitude at basis state $\ket{x}$.

When we say that an $n$-bit string ``fits in one word'', we mean the standard word-RAM convention in which the machine-word size is $\Theta(n)$ bits \cite{hagerup1998sorting}. This is the same convention under which \cite{de2025fast} explicitly counts bitwise XOR on $n$-bit labels as constant time in its complexity analysis (since it is parallelizable).
Our algorithms use only unit-cost word operations, including XOR, AND, shifts, comparison, integer addition modulo $4$, and array lookup on one-word indices.

\subsection{Stabilizer states}

The one-qubit Pauli matrices are
\[
I=\begin{bmatrix}1&0\\0&1\end{bmatrix},\quad
X=\begin{bmatrix}0&1\\1&0\end{bmatrix},\quad
Y=\begin{bmatrix}0&-\ii\\ \ii&0\end{bmatrix},\quad
Z=\begin{bmatrix}1&0\\0&-1\end{bmatrix}.
\]
An $n$-qubit Pauli operator is a tensor product of $n$ one-qubit Paulis, optionally multiplied by a sign. A stabilizer state is the unique common $+1$ eigenvector of $n$ independent commuting Hermitian Pauli operators.

\begin{definition}[Check matrix]
\label{def:check-matrix}
A stabilizer check matrix of an $n$-qubit stabilizer state is a pair
\[
(H,\sigma),
\qquad
H\in \F_2^{n\times 2n},
\qquad
\sigma\in \F_2^n,
\]
where the $r$-th row of $H$ is written as
\[
\bigl(w^{(r)} \mid u^{(r)}\bigr)\in \F_2^{2n}
\qquad (1\le r\le n),
\]
with $w^{(r)},u^{(r)}\in \F_2^n$. The first $n$ columns of $H$ store the $X$-patterns and the last $n$ columns store the $Z$-patterns of the generators. The row
$
\bigl(w^{(r)} \mid u^{(r)}\bigr)
$
together with the sign bit $\sigma_r$ encodes the Hermitian Pauli operator
\[
G_r
=
(-1)^{\sigma_r}
\bigotimes_{j=1}^n P\!\left(w^{(r)}_j,u^{(r)}_j\right),
\]
where
\[
P(0,0)=I,\qquad
P(1,0)=X,\qquad
P(0,1)=Z,\qquad
P(1,1)=Y.
\]
We require $G_1,\dots,G_n$ to be pairwise commuting and independent. The encoded stabilizer state is the unique state $\ket{\psi}$ satisfying
\[
G_r\ket{\psi}=\ket{\psi}
\qquad (r=1,\dots,n).
\]
\end{definition}

For our algorithm, the most useful fact is a standard representation theorem from the stabilizer literature due to \cite{dehaene2003clifford}, in the form used algorithmically by \cite{de2025fast}. Every stabilizer state can be written as a uniform-magnitude superposition over an affine subspace, with phases determined by a quadratic expression over bits.
We will take that representation as the input format.

\begin{definition}[Quadratic form]\label{def:qf}
Let $0\le k\le n$. A \emph{quadratic form description} of an $n$-qubit stabilizer state consists of:
\begin{itemize}[leftmargin=1.5em]
\item a shift vector $h\in \F_2^n$,
\item linearly independent basis vectors $v_1,\dots,v_k\in \F_2^n$,
\item a bit vector $d\in \F_2^k$,
\item an upper-triangular binary matrix $J\in \F_2^{k\times k}$,
\item a nonzero complex scalar $\gamma$.
\end{itemize}
These data specify amplitudes
\begin{equation}
\psi_{h \xor \sum_{t=1}^k y_t v_t} \;=\; \gamma \,\ii^{d^\top y}\,(-1)^{y^\top J y}
\qquad\text{for every } y\in \F_2^k,
\label{eq:qf-description}
\end{equation}
and $\psi_x=0$ for every computational basis label $x\notin h+\operatorname{span}\{v_1,\dots,v_k\}$.

Here all products of bits are ordinary integer products, and the exponents are interpreted modulo $4$ in the obvious way:
\[
\ii^0=1,\qquad \ii^1=\ii,\qquad \ii^2=-1,\qquad \ii^3=-\ii.
\]
\end{definition}

The affine subspace
\[
A \;=\; h + \operatorname{span}\{v_1,\dots,v_k\}
\]
is the \emph{support} of the state, and $k=\dim A$ is its support dimension.

\subsection{The concrete algorithmic problem}

We consider two explicit output materialization tasks.

For stabilizer states, the input is a quadratic form description as in
Definition~\ref{def:qf}, and the output is the \emph{full} $2^n$-entry
amplitude vector
\[
(\psi_x)_{x\in \F_2^n}.
\]

For Clifford gates, the input is a compact Clifford tableau for an $n$-qubit
Clifford operator $C$, namely the Pauli images
\begin{equation}
\label{eq:Ut_and_Vt_definition}
U_t = C Z_t C^\dagger,
\qquad
V_t = C X_t C^\dagger
\qquad
(1\le t\le n),
\end{equation}
where $Z_t$ and $X_t$ denote the single-qubit Pauli operators acting on qubit
$t$ \cite{koenig2014efficiently, selinger2015generators, bravyi2021hadamard}. The output is the \emph{full} $2^n\times 2^n$ dense matrix
\[
(C_{z,x})_{z,x\in \F_2^n},
\qquad
C_{z,x}=\langle z|C|x\rangle,
\]
equivalently, the list of all dense columns $c_x=C|x\rangle$.

\begin{remark}
Throughout the paper, compact stabilizer and Clifford descriptions are understood projectively
unless an explicit scalar phase is supplied. Thus a stabilizer check matrix specifies the
one-dimensional common $+1$ eigenspace, rather than a distinguished normalized vector, and a Clifford tableau specifies the conjugation action of a Clifford operator only up to a global phase. When we speak of materializing a dense state vector or Clifford matrix, we mean materializing a chosen representative of this projective object.
\end{remark}

This explicit output setting is different from the usual polynomial-time simulation problem. In ordinary simulation one keeps a
compact description throughout. Here, by contrast, the goal is specifically to
\emph{materialize} the dense state vector or the dense Clifford matrix.

De Silva et al.~\cite{de2025fast} showed that Gray code is the natural traversal
order for the state vector problem. Their algorithm updates the current support
index in constant time, but its phase update still costs linear time in the
support dimension $k$. Hence the complexity is $O(n2^n)$ in the worst case
$k=n$. For the Clifford matrix problem, the analogous dense reconstruction from
a compact tableau has output size $4^n$ but still carries an extra linear
overhead, giving $O(n4^n)$. Our goal is to remove these remaining factors of
$n$, obtaining $O(2^n)$ time for the full stabilizer state vector and $O(4^n)$
time for the full Clifford matrix.

\section{Gray code and the parity word invariant}

\subsection{Gray code properties}

We will use the binary reflected Gray code
\[
g(m)=m \xor \bigl\lfloor m/2 \bigr\rfloor,\qquad m=0,1,\dots,2^k-1,
\]
where the integer $g(m)$ is identified with its $k$-bit binary expansion \cite{gilbert1958gray, savage1997survey}.

\begin{fact}
\label{fact:gray}
Fix $k\ge 1$. Define $g(m)=m \xor \lfloor m/2\rfloor$ for $m\in\{0,\dots,2^k-1\}$. Then:
\begin{enumerate}[label=\textup{(\roman*)}, leftmargin=1.8em]
\item the map $m\mapsto g(m)$ is a bijection from $\{0,\dots,2^k-1\}$ onto $\F_2^k$;
\item for every $m\in\{1,\dots,2^k-1\}$,
\[
g(m)\xor g(m-1) \;=\; m \mathbin{\&} (-m),
\]
where $\&$ is bitwise AND. In particular, consecutive Gray words differ in exactly one bit.
\end{enumerate}
\end{fact}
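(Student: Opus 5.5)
For part (i), I will show that $g$ maps $\{0,\dots,2^k-1\}$ into $\F_2^k$ and has an explicit inverse. Both $m$ and $\lfloor m/2\rfloor$ are less than $2^k$, so $g(m)<2^k$ and $g(m)$ is a $k$-bit word. For injectivity, write $m=(m_{k-1}\dots m_0)$ in binary. Then $g(m)_j=m_j\xor m_{j+1}$, with the convention $m_k=0$. The bits of $m$ can be recovered from the top down: $m_{k-1}=g(m)_{k-1}$, and $m_j=g(m)_j\xor m_{j+1}$. Thus $m_j=\bigoplus_{i\ge j} g(m)_i$, which gives an explicit left inverse. An injective map between two finite sets of the same size $2^k$ is a bijection.

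For part (ii), I will use linearity over $\F_2$. Since $\lfloor m/2\rfloor$ is a right shift and shifting commutes with XOR,
\[
g(m)\xor g(m-1)=\bigl(m\xor(m-1)\bigr)\xor\bigl(\lfloor m/2\rfloor\xor\lfloor (m-1)/2\rfloor\bigr)
=\Delta\xor\lfloor \Delta/2\rfloor,
\]
where $\Delta=m\xor(m-1)$.

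The key step is to compute $\Delta$. Let $s\ge 0$ be the number of trailing zeros of $m$, which is well defined since $m\ge1$. Subtracting $1$ from $m$ turns its lowest set bit, bit $s$, into $0$, turns bits $0,\dots,s-1$ into $1$, and leaves the higher bits unchanged. Hence $\Delta=2^{s+1}-1$, so $\lfloor\Delta/2\rfloor=2^s-1$ and $\Delta\xor\lfloor\Delta/2\rfloor=2^s$. This is a single bit, so consecutive Gray words differ in exactly one position, namely position $s$.

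It remains to identify $2^s$ with $m\mathbin{\&}(-m)$. Here $-m$ is read in two's complement, as $\overline{m}+1$ on a word of at least $k$ bits. Complementing $m$ makes bits $0,\dots,s-1$ equal to $1$ and bit $s$ equal to $0$. Adding $1$ then carries through bits $0,\dots,s-1$, setting them back to $0$, and sets bit $s$ to $1$. The bits above $s$ remain the complements of the corresponding bits of $m$. Therefore $m\mathbin{\&}(-m)=2^s$.

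The only real subtlety is the machine-word convention for $-m$. I expect it to be harmless, because the argument above only uses bits $0,\dots,s$, and every bit above $s$ is cleared by the AND regardless of the word width.
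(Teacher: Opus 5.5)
Your proof is correct and complete. For (i), the bitwise description $g(m)_j=m_j\xor m_{j+1}$ (with $m_k=0$) and the top-down inverse $m_j=\bigoplus_{i\ge j}g(m)_i$ give injectivity, and equal cardinalities then give bijectivity. For (ii), linearity of the right shift reduces $g(m)\xor g(m-1)$ to $\Delta\xor\lfloor\Delta/2\rfloor$ with $\Delta=m\xor(m-1)=2^{s+1}-1$. Combined with the two's-complement computation $m\mathbin{\&}(-m)=2^s$, this settles the identity.

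The paper does not prove this statement. It records it as a standard Fact with citations to the Gray code literature, so your argument supplies the self-contained verification that the paper omits.

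Writing it out has one small benefit: it makes explicit what the algorithms tacitly rely on. First, $-m$ must be read in two's complement on a word of at least $k$ bits. Second, the flipped position $s$, the number of trailing zeros of $m$, is automatically less than $k$ because $1\le m<2^k$.
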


\subsection{Rewriting the phase}

The phase in Eq.~\eqref{eq:qf-description} has a linear part and a quadratic part. It is convenient to separate the diagonal and off-diagonal contributions.

\begin{definition}[Linear coefficients and interaction matrix]
\label{def:interaction}
Let $d\in \F_2^k$ and let $J\in \F_2^{k\times k}$ be upper triangular. Define:
\[
a_t = d_t + 2J_{tt}\in \Z_4 \qquad (1\le t\le k),
\]
and define the symmetric zero-diagonal matrix $B\in \F_2^{k\times k}$ by
\[
B_{ij}=
\begin{cases}
J_{ij}, & i<j,\\
0, & i=j,\\
J_{ji}, & i>j.
\end{cases}
\]
Then for every $y\in \F_2^k$,
\begin{equation}
d^\top y + 2\, y^\top J y
\equiv
\sum_{t=1}^k a_t y_t
+
2\sum_{1\le i<j\le k} B_{ij} y_i y_j
\pmod 4.
\label{eq:phase-decomp}
\end{equation}
\end{definition}

We now introduce the cached quantity that makes Gray-step updates faster.

\begin{definition}[Parity vector]
For $y\in \F_2^k$, define the \emph{parity vector}
\[
p(y)=By\in \F_2^k.
\]
Equivalently, its $t$-th component is
\[
p_t(y)=\sum_{j=1}^k B_{tj} y_j \pmod 2.
\]
Because $B$ has zero diagonal, $p_t(y)$ depends on every active support bit \emph{except} $y_t$ itself.
\end{definition}

The entire algorithm rests on the next two lemmas.

\begin{lemma}[Phase increment of a single Gray code flip]
\label{lem:phase-increment}
Let
\[
\phi(y)=\sum_{t=1}^k a_t y_t + 2\sum_{1\le i<j\le k} B_{ij} y_i y_j \pmod 4.
\]
Then for every $y\in \F_2^k$ and every standard basis vector $e_t\in \F_2^k$,
\[
\phi(y\xor e_t)-\phi(y)\equiv (1-2y_t)a_t + 2\,p_t(y)\pmod 4.
\]
Equivalently,
\[
\phi(y\xor e_t)-\phi(y)\equiv
\begin{cases}
+a_t + 2p_t(y)\pmod 4, & y_t=0,\\
-a_t + 2p_t(y)\pmod 4, & y_t=1.
\end{cases}
\]
\end{lemma}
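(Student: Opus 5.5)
The plan is a direct computation that splits $\phi$ into its linear and quadratic parts and compares $y$ with $y' = y\xor e_t$ coordinate by coordinate. Since $y'$ agrees with $y$ except in coordinate $t$, where $y'_t = 1-y_t$, every term of $\phi$ not involving index $t$ cancels in the difference $\phi(y')-\phi(y)$. Throughout, the bits $y_j\in\{0,1\}$ are treated as integers inside $\Z_4$, so that sums and differences are ordinary integer arithmetic reduced modulo $4$. Bitwise XOR appears only in the single relation $y'_t = 1-y_t$.

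For the linear part $\sum_s a_s y_s$, only the $s=t$ term survives, giving $a_t(y'_t - y_t) = a_t(1-2y_t)$. This is exactly $+a_t$ when $y_t=0$ and $-a_t$ when $y_t=1$.

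For the quadratic part $2\sum_{i<j} B_{ij} y_i y_j$, only pairs containing $t$ survive. Because $B$ is symmetric with zero diagonal, these surviving terms can be rewritten as $2\,y_t\sum_{j\ne t} B_{tj} y_j$, with the sum taken over integers. The difference is then
\[
2(y'_t - y_t)\sum_{j\ne t} B_{tj}y_j = 2(1-2y_t)\,S_t, \qquad S_t=\sum_{j} B_{tj} y_j \in \Z .
\]
The key observation, and the only step needing care, is reducing this mod $4$. First, $2(1-2y_t)S_t \equiv 2S_t$, since $1-2y_t=\pm1$ and $-2\equiv 2 \pmod 4$. Second, $2S_t \bmod 4$ depends only on $S_t \bmod 2$, which equals $p_t(y)$ by definition. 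The zero diagonal of $B$ guarantees that $S_t$ does not involve $y_t$, so evaluating it at $y$ or at $y'$ gives the same value.

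Adding the two contributions yields $(1-2y_t)a_t + 2p_t(y) \pmod 4$, and the case split in the statement is immediate from $1-2y_t\in\{+1,-1\}$. I expect no real obstacle beyond this bookkeeping: keeping $\F_2$ arithmetic separate from $\Z_4$ arithmetic, and justifying that the factor $2$ lets us replace the integer sum by its parity.
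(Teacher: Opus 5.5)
Your proposal is correct and follows essentially the same route as the paper: split $\phi$ into linear and quadratic parts, keep only the terms involving index $t$, factor out $1-2y_t$, use $2(1-2y_t)\equiv 2 \pmod 4$, and identify the remaining sum with $2p_t(y)$ via the symmetry and zero diagonal of $B$. You also note explicitly that $2S_t \bmod 4$ depends only on $S_t \bmod 2$, which makes precise a step the paper leaves implicit.
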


\begin{proof}
Let $y'=y\xor e_t$. Then $y'_t=1-y_t$ and $y'_j=y_j$ for $j\ne t$.

For the linear part,
\[
\sum_{s=1}^k a_s y'_s-\sum_{s=1}^k a_s y_s
=
a_t(y'_t-y_t)
=
a_t\bigl((1-y_t)-y_t\bigr)
=
(1-2y_t)a_t.
\]

Now consider the quadratic part
\[
Q(y)=2\sum_{1\le i<j\le k} B_{ij} y_i y_j.
\]
Only monomials containing $y_t$ can change, so
\[
Q(y')-Q(y)
=
2\sum_{j<t} B_{jt}\bigl(y_j y'_t-y_j y_t\bigr)
+
2\sum_{t<j} B_{tj}\bigl(y'_t y_j-y_t y_j\bigr).
\]
Since $y'_t-y_t=(1-y_t)-y_t=1-2y_t$, this becomes
\[
Q(y')-Q(y)
=
2(1-2y_t)\left(\sum_{j<t} B_{jt} y_j + \sum_{t<j} B_{tj} y_j\right).
\]
Because $1-2y_t\in\{1,-1\}$ and $2(-1)\equiv 2 \pmod 4$, we have
\[
2(1-2y_t)\equiv 2 \pmod 4.
\]
Hence
\[
Q(y')-Q(y)
\equiv
2\left(\sum_{j<t} B_{jt} y_j + \sum_{t<j} B_{tj} y_j\right)
\pmod 4.
\]
Since $B$ is symmetric and has zero diagonal, the right-hand side is exactly
\[
2\sum_{j\ne t} B_{tj} y_j = 2\,p_t(y)
\pmod 4.
\]

Adding the linear and quadratic differences gives
\[
\phi(y')-\phi(y)\equiv (1-2y_t)a_t + 2p_t(y)\pmod 4.
\]
\end{proof}

\begin{lemma}
\label{lem:parity-update}
Let $c_t\in \F_2^k$ be the $t$-th column of $B$. Then for every $y\in\F_2^k$,
\[
p(y\xor e_t)=p(y)\xor c_t.
\]
\end{lemma}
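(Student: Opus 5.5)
The plan is to prove the identity directly from linearity of the map $y\mapsto By$ over $\F_2$. By the definition of the parity vector, $p(y)=By$, where matrix–vector multiplication is carried out over $\F_2$, so addition is XOR. For any $y\in\F_2^k$ we have $y\xor e_t = y + e_t$ in $\F_2^k$. Distributivity of matrix multiplication over vector addition in the vector space $\F_2^k$ then gives
\[
p(y\xor e_t)=B(y\xor e_t)=By\xor Be_t .
\]
The remaining step is the elementary observation that $Be_t$ is the $t$-th column of $B$, namely $c_t$. Substituting $p(y)=By$ yields $p(y\xor e_t)=p(y)\xor c_t$.

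As a sanity check, I will also verify the identity componentwise. This matches how the paper describes $p_t(y)$ and shows that no special role for the zero diagonal or symmetry of $B$ is needed. For each coordinate $s$,
\[
p_s(y\xor e_t)=\sum_{j}B_{sj}(y\xor e_t)_j \pmod 2 .
\]
This sum differs from $p_s(y)$ only in the $j=t$ term, which changes by $B_{st}\cdot 1$ modulo $2$. Hence $p_s(y\xor e_t)=p_s(y)\xor B_{st}$, and $B_{st}$ is exactly the $s$-th entry of $c_t$. Symmetry of $B$ lets one equivalently say ``$t$-th row'', which is what makes the later one-word update convenient, but the lemma itself needs only linearity.

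There is no real obstacle here. The only point requiring care is to keep all arithmetic in $\F_2$ rather than $\Z_4$: $p(y)$ is a mod-$2$ quantity, unlike the phase $\phi$ in Lemma~\ref{lem:phase-increment}. So the flip of $y_t$, whether from $0$ to $1$ or from $1$ to $0$, contributes the same XOR by $c_t$, and no sign case analysis like that in Lemma~\ref{lem:phase-increment} arises.
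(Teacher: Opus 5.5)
Your proposal is correct and matches the paper's proof: both use linearity of $y\mapsto By$ over $\F_2$ to get $B(y\xor e_t)=By\xor Be_t$, and then identify $Be_t$ with the column $c_t$. Your componentwise check is accurate, as is your remark that the lemma needs neither symmetry nor the zero diagonal of $B$, though neither is required for the proof.
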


\begin{proof}
Because $p(y)=By$ over $\F_2$,
\[
p(y\xor e_t)=B(y\xor e_t)=By \xor Be_t = p(y)\xor c_t.
\]
\end{proof}

\begin{remark}
Lemma~\ref{lem:phase-increment} says that if we already know the current bit $y_t$ and the parity bit $p_t(y)$, then the next phase exponent needs only constant work. Lemma~\ref{lem:parity-update} says that after the bit flip, we can update \emph{all} future phase parities at once by one XOR with a precomputed column of $B$. Thus the algorithm keeps two compact invariants during the Gray walk, namely the current Gray word $y$ itself and the parity word $By$. We cache the whole vector $By$ instead of recomputing a single entry from scratch at every step.
\end{remark}

\section{Optimal materialization of stabilizer states}
\label{sec:stabilizer-materialization}

\subsection{A word-oriented implementation}

We encode each $k$-bit vector as one machine word. During the Gray code traversal we explicitly maintain the current Gray word $y\in\F_2^k$ as well as the parity word $p(y)=By\in\F_2^k$. For the Gray code traversal we use the one-hot flip word
\[
f_m = m \mathbin{\&} (-m), \qquad m=1,2,\dots,2^k-1.
\]
By Fact~\ref{fact:gray}, this is exactly the single bit that changes between Gray words $g(m-1)$ and $g(m)$.

To avoid any hidden primitive for extracting the bit index from the one-hot word $f_m$, we use one-hot lookup tables. Concretely, we allocate arrays indexed by integers in $\{0,\dots,2^k-1\}$ and fill only the power-of-two locations:
\[
\lookup{A}[2^{t-1}] = a_t,\qquad
\lookup{V}[2^{t-1}] = v_t,\qquad
\lookup{C}[2^{t-1}] = c_t.
\]
This keeps the algorithm completely elementary. Since $2^k\le 2^n$, the extra table size does not affect the final asymptotic bound.

A subtle but important point is that the linear phase term changes by $+a_t$ when bit $t$ flips from $0$ to $1$, but by $-a_t$ when it flips back from $1$ to $0$. That is why the algorithm explicitly stores the current Gray word $y$ in addition to the parity word $By$.

\begin{algorithm}[t]
\caption{}
\label{alg:main}
\begin{algorithmic}[1]
\Require $n$, $k$, shift $h\in\F_2^n$, linearly independent $v_1,\dots,v_k\in\F_2^n$, phase data $d\in\F_2^k$, upper-triangular $J\in\F_2^{k\times k}$, scalar $\gamma\in\mathbb C\setminus\{0\}$
\Ensure Full array $(\psi_x)_{x\in\F_2^n}$

\State Build $a_1,\dots,a_k$ and $B$ from Definition~\ref{def:interaction}
\State Initialize one-hot lookup tables $\lookup{A},\lookup{V},\lookup{C}$ of length $2^k$ to zero
\For{$t=1$ to $k$}
    \State $\lookup{A}[2^{t-1}] \gets a_t$
    \State $\lookup{V}[2^{t-1}] \gets v_t$
    \State $\lookup{C}[2^{t-1}] \gets$ the $t$-th column of $B$
\EndFor
\State Initialize a length-$2^n$ complex array $\psi$ to all zeros
\State $y \gets 0$ \Comment{current Gray word in $\F_2^k$}
\State $x \gets h$ \Comment{current support index in $\F_2^n$}
\State $p \gets 0$ \Comment{current parity word $p(y)=By$ in $\F_2^k$}
\State $q \gets 0$ \Comment{current phase exponent in $\Z_4$}
\State $\psi_x \gets \gamma$
\For{$m=1$ to $2^k-1$}
    \State $f \gets m \mathbin{\&} (-m)$ \Comment{one-hot Gray code flip word}
    \If{$(y \mathbin{\&} f)=0$}
        \State $q \gets q + \lookup{A}[f] + 2\,\ind{(p \mathbin{\&} f)\neq 0}\pmod 4$
    \Else
        \State $q \gets q - \lookup{A}[f] + 2\,\ind{(p \mathbin{\&} f)\neq 0}\pmod 4$
    \EndIf
    \State $y \gets y \xor f$
    \State $x \gets x \xor \lookup{V}[f]$
    \State $p \gets p \xor \lookup{C}[f]$
    \State $\psi_x \gets \gamma\,\ii^q$
\EndFor
\State \Return $\psi$
\end{algorithmic}
\end{algorithm}

\subsection{Correctness}

We now prove that Algorithm~\ref{alg:main} writes exactly the amplitudes prescribed by Definition~\ref{def:qf} (equivalently, by Eq.~\eqref{eq:qf-description}).

\begin{theorem}
\label{thm:correctness}
Algorithm~\ref{alg:main} returns the full amplitude vector of the stabilizer state specified by the input quadratic form description.
\end{theorem}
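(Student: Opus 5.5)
The proof is an induction on the loop counter $m$, maintaining a loop invariant. After the iteration with index $m$ (with $m=0$ meaning just before the loop), the plan is to show that four statements hold.
\begin{itemize}[leftmargin=1.5em]
\item $y=g(m)$.
\item $x=h\xor\sum_{t=1}^k y_t v_t$.
\item $p=By=p(y)$.
\item $q\equiv\phi(y)\pmod 4$, where $\phi$ is the phase of Lemma~\ref{lem:phase-increment}.
\end{itemize}
The amplitude just written is then $\psi_x=\gamma\,\ii^{q}$. By Definition~\ref{def:interaction}, Eq.~\eqref{eq:phase-decomp} gives $\ii^{\phi(y)}=\ii^{d^\top y}(-1)^{y^\top Jy}$, since $(-1)^{s}=\ii^{2s}$. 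So $\gamma\,\ii^{q}$ is exactly the value prescribed by Eq.~\eqref{eq:qf-description}.

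The base case is immediate. We have $g(0)=0$ and $x=h$, and $p=B\cdot 0=0$ and $q=0=\phi(0)$, so the initial write $\psi_h\gets\gamma$ is correct.

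For the inductive step, fix $m\ge1$. By Fact~\ref{fact:gray}(ii), $f=m\mathbin{\&}(-m)$ equals $g(m)\xor g(m-1)$, and it is a one-hot word $2^{t-1}$ for a unique $t$. Hence $\lookup{A}[f]=a_t$, $\lookup{V}[f]=v_t$ and $\lookup{C}[f]=c_t$ by construction of the tables. The test $(y\mathbin{\&}f)=0$ reads $y_t$, and $\ind{(p\mathbin{\&}f)\neq0}$ equals $p_t(y)$ for the old $y$. The phase update is therefore $q\gets q+(1-2y_t)a_t+2p_t(y)$, which by Lemma~\ref{lem:phase-increment} and the induction hypothesis yields $\phi(y\xor e_t)$. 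This update must be computed before $y$ and $p$ change, and the algorithm does so. The new values $y\xor e_t=g(m)$ and $x\xor v_t$ preserve the first two parts of the invariant, since flipping $y_t$ toggles the contribution of $v_t$. Lemma~\ref{lem:parity-update} gives $p\xor c_t=B(y\xor e_t)$.

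Completeness comes next. By Fact~\ref{fact:gray}(i), $y$ ranges over all of $\F_2^k$ as $m$ runs from $0$ to $2^k-1$. Since $v_1,\dots,v_k$ are linearly independent, the map $y\mapsto h\xor\sum_t y_tv_t$ is a bijection from $\F_2^k$ onto $A$. So every support entry is written exactly once, with the correct value, and nothing is overwritten. Entries outside $A$ are never touched and keep their initial zero, matching Definition~\ref{def:qf}.

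I expect the main obstacle to be only bookkeeping: checking that the sign branch uses the pre-flip value $y_t$ and that $p_t$ is read before $p$ is updated. I will handle this by stating the invariant precisely at the start of each iteration and tracking the order of assignments.
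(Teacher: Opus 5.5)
Your proposal is correct and follows essentially the same route as the paper. Both run an induction over the Gray-code steps with the invariants $y=g(m)$, $x=h\xor\sum_t y_tv_t$, $p=By$, $q=\phi(y)$, use Lemmas~\ref{lem:phase-increment} and~\ref{lem:parity-update} for the inductive step, and obtain completeness from Gray-code bijectivity together with injectivity from the linear independence of the $v_t$. Your explicit use of Eq.~\eqref{eq:phase-decomp} to identify $\ii^{\phi(y)}$ with $\ii^{d^\top y}(-1)^{y^\top Jy}$ spells out a step the paper leaves implicit.
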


\begin{proof}
Let $g(0),g(1),\dots,g(2^k-1)$ be the binary reflected Gray code from Fact~\ref{fact:gray}. For each $m$, define
\[
y^{(m)} = g(m)\in \F_2^k,
\qquad
x^{(m)} = h \xor \sum_{t=1}^k y^{(m)}_t v_t \in \F_2^n,
\]
and let
\[
q^{(m)} = \phi\bigl(y^{(m)}\bigr)\in \Z_4,
\qquad
p^{(m)} = p\bigl(y^{(m)}\bigr)=B y^{(m)}\in \F_2^k.
\]
We prove by induction on $m$ that after the $m$-th write of the algorithm (counting the initial write as $m=0$), the internal variables satisfy
\[
y = y^{(m)},\qquad x = x^{(m)},\qquad p=p^{(m)},\qquad q=q^{(m)},
\]
and the array entry written at that moment is exactly
\[
\psi_{x^{(m)}} = \gamma \,\ii^{q^{(m)}}.
\]

\textbf{Base case $m=0$.} The algorithm sets $y=0$, $x=h$, $p=0$, and $q=0$, then writes $\psi_h=\gamma$. Since $y^{(0)}=g(0)=0$, we have
\[
y^{(0)}=0,\qquad x^{(0)}=h,\qquad p^{(0)}=B0=0,\qquad q^{(0)}=\phi(0)=0.
\]
So the claim holds at $m=0$.

\textbf{Induction step.} Assume the claim holds after the $(m-1)$-st write, where $1\le m\le 2^k-1$. By Fact~\ref{fact:gray},
\[
y^{(m)} = y^{(m-1)} \xor e_t
\]
for the unique coordinate $t$ whose one-hot word is
\[
f_m = g(m)\xor g(m-1) = m\mathbin{\&}(-m).
\]

Because the current algorithmic variable is $y=y^{(m-1)}$ by the induction hypothesis, the condition $(y\& f_m)=0$ is equivalent to $y_t^{(m-1)}=0$, and $(y\& f_m)\neq 0$ is equivalent to $y_t^{(m-1)}=1$.

First the algorithm updates the phase exponent. Since $(p^{(m-1)} \mathbin{\&} f_m)\neq 0$ exactly when the $t$-th bit of $p^{(m-1)}$ is $1$, Algorithm~\ref{alg:main} performs the update
\[
q \gets
\begin{cases}
q + a_t + 2\,p_t\bigl(y^{(m-1)}\bigr)\pmod 4, & y_t^{(m-1)}=0,\\[0.4em]
q - a_t + 2\,p_t\bigl(y^{(m-1)}\bigr)\pmod 4, & y_t^{(m-1)}=1.
\end{cases}
\]
By the two-case form of Lemma~\ref{lem:phase-increment}, this is exactly
\[
q \gets q + \phi\bigl(y^{(m)}\bigr)-\phi\bigl(y^{(m-1)}\bigr)\pmod 4.
\]
Using the induction hypothesis $q=q^{(m-1)}=\phi(y^{(m-1)})$, we conclude that the updated variable satisfies
\[
q = q^{(m)}.
\]

Next the algorithm updates the Gray word:
\[
y \gets y \xor f_m = y^{(m-1)} \xor e_t = y^{(m)}.
\]

Because $\lookup{V}[f_m]=v_t$, the algorithm updates
\[
x \gets x \xor \lookup{V}[f_m] = x^{(m-1)} \xor v_t = x^{(m)}.
\]

Because $\lookup{C}[f_m]=c_t$, where $c_t$ is the $t$-th column of $B$, Lemma~\ref{lem:parity-update} gives
\[
p \gets p \xor \lookup{C}[f_m]
=
p^{(m-1)} \xor c_t
=
p\bigl(y^{(m-1)}\xor e_t\bigr)
=
p^{(m)}.
\]

Hence the new write is
\[
\psi_{x^{(m)}} = \gamma \,\ii^{q^{(m)}}.
\]
By Definition~\ref{def:qf}, that is precisely the correct amplitude at the support point indexed by $x^{(m)}$.

It remains to show that every support point is written exactly once and every non-support point remains zero. Fact~\ref{fact:gray}(i) says that the Gray code visits every $y\in\F_2^k$ exactly once. Since the vectors $v_1,\dots,v_k$ are linearly independent, the map
\[
y \longmapsto h \xor \sum_{t=1}^k y_t v_t
\]
is injective. Indeed, if
\[
h \xor \sum_{t=1}^k y_t v_t = h \xor \sum_{t=1}^k z_t v_t,
\]
then XORing both sides with $h$ yields
\[
\sum_{t=1}^k (y_t\xor z_t)\,v_t = 0.
\]
Linear independence of the $v_t$ forces $y_t\xor z_t=0$ for every $t$, hence $y=z$. So the support points $x^{(m)}$ are all distinct, and every element of the affine support is written exactly once. All other array entries were initialized to zero and never changed. Therefore the returned array is exactly the full amplitude vector.
\end{proof}

\begin{figure}
    \centering
    \includegraphics[width=0.9\linewidth]{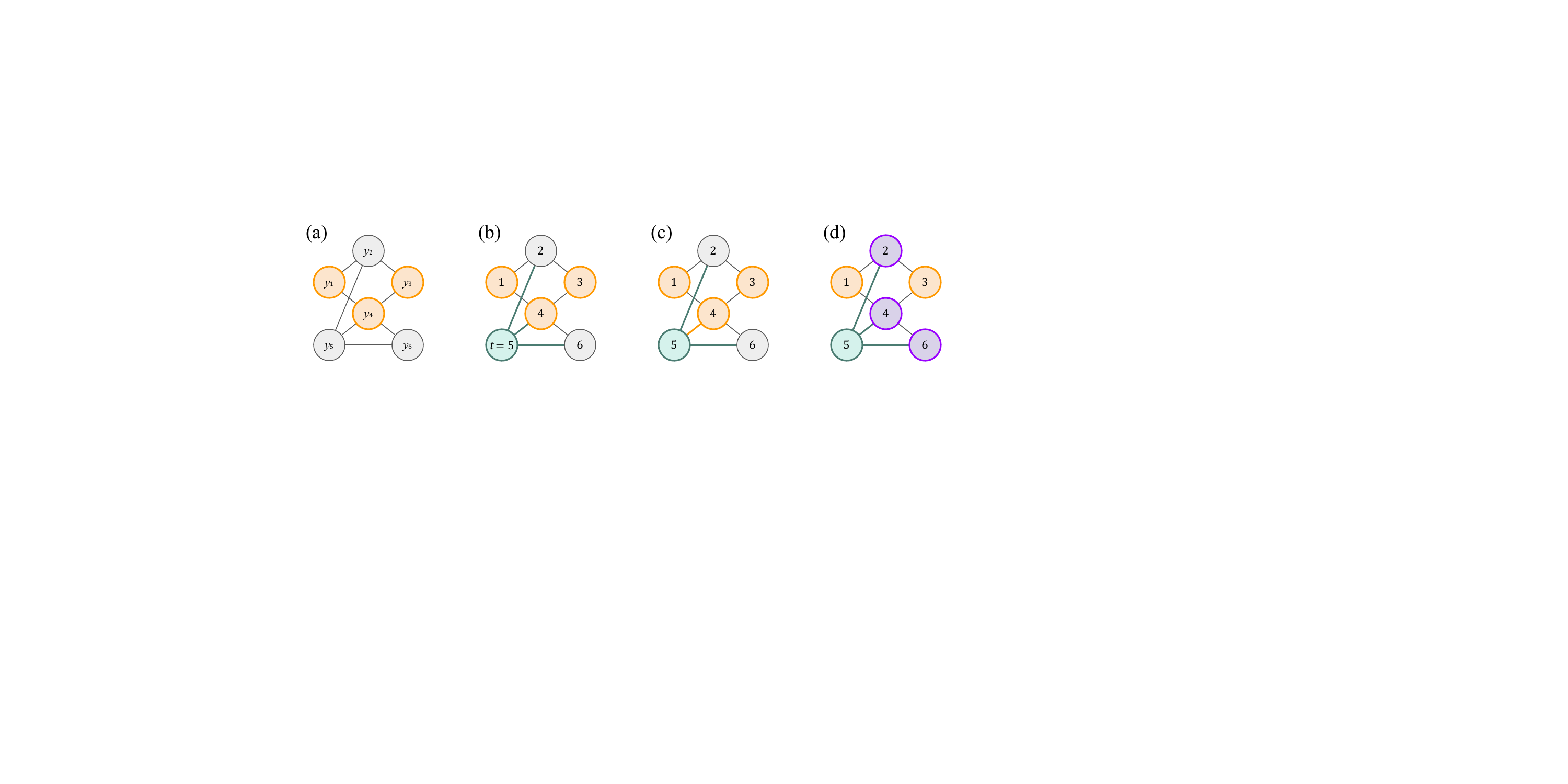}
    \caption{A $k=6$ example of a quadratic phase represented as a graph. Vertices and edges represent the support coordinates $y_t$ and off-diagonal terms $B_{ij}y_iy_j$, respectively. (a) For the active support word $y = (1, 0, 1, 1, 0, 0)$, the resulting active set is $\{1, 3, 4\}$. (b) Before the flip, node 5 possesses only one active neighbor (node 4) out of its neighborhood $\{2, 4, 6\}$, leading to a parity value of $p_5(y) = 1$. (c) Following the flip to the new active set $\{1, 3, 4, 5\}$, the odd parity of active neighbors for vertex 5 yields a sign contribution of $\ii^{2p_5} = -1$ via the edge $(4, 5)$. (d) Since column $c_5 = B[:,5]$ contains ones exactly at positions $\{2, 4, 6\}$, flipping vertex 5 toggles the parity values cached at those neighboring vertices.}
    \label{fig:graph-example}
\end{figure}

A graphical example is illustrated in Figure~\ref{fig:graph-example}.

\subsection{Complexity and optimality}

\begin{theorem}
\label{thm:complexity}
On the word-RAM model described above, Algorithm~\ref{alg:main} runs in time
$
O(2^n)
$
and uses space
$
O(2^n).
$
More precisely, its running time is
$
O(2^n + 2^k + k^2),
$
and its working memory beyond the output array is
$
O(2^k + k^2)
$
words.
\end{theorem}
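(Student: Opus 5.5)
The plan is to account for the cost of Algorithm~\ref{alg:main} line by line, split into three phases: preprocessing, output initialization, and the Gray code loop. Each line will be charged against the word-RAM model fixed in the preliminaries, where XOR, AND, negation, comparison, addition modulo $4$ and array indexing by a one-word integer all cost $O(1)$. The bound $O(2^n+2^k+k^2)$ will then collapse to $O(2^n)$ using $k\le n$, which gives $2^k\le 2^n$ and $k^2=O(2^n)$.

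\textbf{Preprocessing.} Building $a_1,\dots,a_k$ from Definition~\ref{def:interaction} costs $O(k)$, since each $a_t=d_t+2J_{tt}$ is one arithmetic operation. Forming the symmetric matrix $B$ entry by entry from $J$ costs $O(k^2)$. If we then pack each column $c_t$ of $B$ into one $k$-bit word by setting bits, we spend $O(k)$ per column and $O(k^2)$ in total. The lookup tables $\lookup{A},\lookup{V},\lookup{C}$ each have length $2^k$, so zero-initializing them costs $O(2^k)$. The loop that fills the power-of-two slots costs $O(k)$, provided the indices $2^{t-1}$ are generated by repeated shifts; each $v_t$ is already a single $n$-bit word. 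The $B$ matrix occupies $O(k^2)$ bits, which is at most $O(k^2)$ words, and the three tables occupy $O(2^k)$ words. This gives the stated working-memory bound, and the output array adds $O(2^n)$.

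\textbf{Output initialization and main loop.} Zeroing the length-$2^n$ complex array $\psi$ takes $O(2^n)$ time. The main loop runs $2^k-1$ times, and each iteration does a fixed number of operations: one computation of $m\mathbin{\&}(-m)$, two ANDs with zero tests, three table lookups at index $f<2^k$, three XORs, one update modulo $4$, and one write $\psi_x\gets\gamma\ii^q$. For the write, I would precompute the four values $\gamma\ii^0,\dots,\gamma\ii^3$ once, so each write is a single lookup and store, costing $O(1)$ word operations. The index $x$ is an $n$-bit word, so the array access is unit cost. The loop therefore costs $O(2^k)$, and summing all phases gives $O(2^n+2^k+k^2)$.

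\textbf{Main obstacle.} The only delicate point is checking that every primitive really is unit cost under the stated conventions. This covers $-m$ for $m<2^k$ (two's complement within a word of $\Theta(n)\ge k$ bits), the one-hot indexing that avoids bit-scan instructions, and multiplying by $\ii^q$ without complex arithmetic on growing numbers. The precomputed four-entry phase table handles the last issue. All other steps are routine.
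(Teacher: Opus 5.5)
Your proposal is correct and follows essentially the same three-part accounting as the paper: $O(k^2)$ to build $a$ and $B$, $O(2^k+k)$ to initialize and fill the one-hot tables, and $O(2^n)$ to zero the output array plus $O(2^k)$ for the loop of $2^k-1$ constant-cost iterations, with the final $O(2^n)$ bound following from $k\le n$. Your additions (the four-entry table of $\gamma\ii^q$ and the two's-complement justification for $m\mathbin{\&}(-m)$) are harmless refinements of steps the paper simply treats as unit-cost word operations.
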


\begin{proof}
There are three cost components.

\begin{enumerate}
\item \textbf{Building $a$ and $B$.} Computing the $k$ values $a_t$ and the $k\times k$ interaction matrix $B$ takes $O(k^2)$ time.

\item \textbf{One-hot lookup tables.} Initializing and filling the one-hot tables $\lookup{A},\lookup{V},\lookup{C}$ takes $O(2^k + k)$ time and $O(2^k)$ words of space.

\item \textbf{Output array and Gray traversal.} Initializing the output array of length $2^n$ to zero costs $O(2^n)$ time and $O(2^n)$ space. The Gray loop has exactly $2^k-1$ iterations. In each iteration the algorithm performs:
\begin{itemize}[leftmargin=1.5em]
\item one least-significant-bit extraction $m\&(-m)$,
\item one table lookup for $\lookup{A}[f]$,
\item two bit tests, namely $(y\&f)=0$ and $(p\&f)\neq 0$,
\item three XOR updates, namely $y\gets y\xor f$, $x\gets x\xor \lookup{V}[f]$, and $p\gets p\xor \lookup{C}[f]$,
\item one assignment of the next amplitude.
\end{itemize}
All of these are $O(1)$ word operations. So the loop costs $O(2^k)$ time.
\end{enumerate}

Adding the three components gives
\[
O(k^2) + O(2^k) + O(2^n) = O(2^n + 2^k + k^2).
\]
The space bound is obtained similarly.
\end{proof}

\begin{corollary}
Algorithm~\ref{alg:main} is asymptotically optimal for the explicit output task of converting quadratic form stabilizer data to a full state vector.
\end{corollary}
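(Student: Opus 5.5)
Optimality here means that the upper bound of Theorem~\ref{thm:complexity} matches a lower bound for any algorithm solving the task. The proof therefore has two halves.

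\textbf{Upper bound.} Theorem~\ref{thm:correctness} shows that Algorithm~\ref{alg:main} is correct. Theorem~\ref{thm:complexity} shows it runs in time $O(2^n+2^k+k^2)$. Since $k\le n$, we have $2^k\le 2^n$ and $k^2\le n^2=O(2^n)$, so the running time is $O(2^n)$.

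\textbf{Lower bound.} I will show that every algorithm for this explicit output task needs $\Omega(2^n)$ time in the worst case, by an output-size (adversary) argument. The output is the full array $(\psi_x)_{x\in\F_2^n}$ of length $2^n$, and in the word-RAM model each unit-cost operation writes $O(1)$ words.

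To make the bound hold even on inputs where most entries are nonzero, I take $k=n$, $h=0$, $v_t=e_t$, $d=0$, $J=0$ and $\gamma=1$. This is the state with $\psi_x=1$ for every $x$. Every output cell must then be written at least once: if some cell $x$ were never touched, its content would not depend on the input. But a cell left untouched holds the same value on this input and on a variant with $J_{11}$ or $d$ changed so that $\psi_x$ differs. More simply, it holds the same value on a variant whose support excludes $x$ (for instance $k=n-1$ with a suitably chosen hyperplane), where $\psi_x$ must be $0$ rather than a nonzero value.

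So at least $2^n$ writes are needed, giving time $\Omega(2^n)$.

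\textbf{Main obstacle.} The delicate step is the formal convention for the output array. If the model supplies a pre-zeroed output array for free, then low-support inputs ($k$ small) require only $2^k$ writes. Worst-case optimality survives anyway: at $k=n$ every one of the $2^n$ amplitudes has modulus $|\gamma|\neq 0$. Each such cell must be written with a nonzero value, because a pre-zeroed cell that is never written would read as $0$. This again yields $\Omega(2^n)$, matching the upper bound.

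I would state the result as worst-case optimality over inputs of size $n$. If one also wants optimality for each fixed $k$ under the free-zeroing convention, the same argument gives the matching bound $\Theta(2^k)$ for the Gray loop.
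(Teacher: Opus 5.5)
Your proposal is correct and matches the paper's proof: the $O(2^n)$ bound of Theorem~\ref{thm:complexity} (where you usefully make explicit that $k\le n$ absorbs the $2^k+k^2$ terms) is paired with the trivial $\Omega(2^n)$ output-size lower bound. Of your lower-bound justifications, the $k=n$ argument that every pre-zeroed output cell must receive a nonzero value is the one that actually suffices. The cross-input claim that an untouched cell ``holds the same value'' on a variant input is not justified as stated, since the algorithm may take a different path and write that cell on the variant.
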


\begin{proof}
Theorem~\ref{thm:complexity} gives an $O(2^n)$ upper bound. Conversely, the task of writing an output vector of length $2^n$ imposes a trivial lower bound of $\Omega(2^n)$. Therefore the running time is $\Theta(2^n)$.
\end{proof}

\begin{corollary}[From a stabilizer check matrix to a full state vector]
\label{cor:check-matrix}
If one combines the $O(n^3)$ reduction from a stabilizer check matrix to quadratic form data in \cite{de2025fast} with Algorithm~\ref{alg:main}, then a full stabilizer state vector can be materialized from a check matrix description in time $O(2^n)$.
\end{corollary}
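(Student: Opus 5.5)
The plan is to compose the two stages and add their costs. The first stage runs the reduction of \cite{de2025fast}, which turns a check matrix $(H,\sigma)$ into quadratic form data $(h,v_1,\dots,v_k,d,J,\gamma)$ as in Definition~\ref{def:qf}. The second stage feeds these data to Algorithm~\ref{alg:main}. Correctness needs two facts. The first is that the reduction's output describes the unique common $+1$ eigenvector of $G_1,\dots,G_n$, up to the projective freedom allowed by the Remark on global phases; this is taken from \cite{de2025fast} together with the representation theorem of \cite{dehaene2003clifford}. The second is that Algorithm~\ref{alg:main} writes exactly the amplitude vector those data prescribe, which is Theorem~\ref{thm:correctness}.

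For the running time, the reduction costs $O(n^3)$ bit operations, and at most $O(n^3)$ word operations however they are counted. By Theorem~\ref{thm:complexity}, Algorithm~\ref{alg:main} runs in $O(2^n+2^k+k^2)$ time with $k\le n$, which is $O(2^n)$. The total is therefore $O(n^3+2^n)$.

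The only step with content is absorbing the polynomial term. Since $n^3/2^n\to 0$, there is a constant $n_0$ with $n^3\le 2^n$ for all $n\ge n_0$ (for example $n_0=10$). Hence $n^3\le c\,2^n$ for all $n\ge 1$, taking $c=\max_{n<n_0} n^3/2^n$, so $O(n^3+2^n)=O(2^n)$.

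One interface point must be checked, and I expect it to be the main (mild) obstacle. The data passed between the stages must be in the exact format Algorithm~\ref{alg:main} expects: linearly independent $v_t$, an upper-triangular $J$, and $k$-bit words. If the reduction instead returns a general, non-triangular $J$, I would symmetrize it in $O(k^2)$ time by setting $J'_{ij}=J_{ij}\oplus J_{ji}$ for $i<j$ and $J'_{tt}=J_{tt}$; this leaves the phase $(-1)^{y^\top Jy}$ unchanged. I would likewise note that $\gamma$ can be chosen to be any fixed nonzero normalization, because the check matrix determines the state only projectively. Both adjustments are polynomial and are absorbed into the $O(2^n)$ bound by the same argument.
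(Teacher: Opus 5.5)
Your proposal is correct and matches the paper's argument, which is left implicit. You compose the $O(n^3)$ check-matrix-to-quadratic-form reduction with Algorithm~\ref{alg:main}, whose cost is $O(2^n+2^k+k^2)=O(2^n)$ by Theorem~\ref{thm:complexity}, and absorb the polynomial preprocessing into $O(2^n)$; your extra interface checks (symmetrizing a non-triangular $J$, fixing $\gamma$ projectively) are harmless polynomial-time steps that the paper takes for granted.
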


\begin{remark}[Implication for stabilizer-state verification]
De Silva et al. also give an $O(2^n)$ routine for extracting quadratic form data from an explicit state vector \cite{de2025fast}. Their verifier then reconstructs the state with the $O(n2^n)$ generator and compares against the input, which leads to an $O(n2^n)$ worst-case verification bound. Replacing that generator by Algorithm~\ref{alg:main} lowers that verification pipeline to $O(2^n)$ on the same model.
\end{remark}

\begin{example}[Single-qubit phase state]
Consider
\[
\ket{\psi}=\frac{1}{\sqrt 2}\bigl(\ket{0}+\ii\ket{1}\bigr).
\]
This is a stabilizer state, specifically the $+1$ eigenstate of $Y$. A quadratic form description is
\[
n=1,\quad k=1,\quad h=0,\quad v_1=1,\quad d_1=1,\quad J=[0],\quad \gamma=\frac{1}{\sqrt 2}.
\]
Then $a_1=1$ and $B=[0]$, so the parity word is always $0$. The algorithm writes
\[
\psi_0=\gamma,\qquad \psi_1=\gamma\,\ii,
\]
which is exactly the desired state vector.
\end{example}

\begin{example}[Two-qubit cluster state]
Consider the two-qubit cluster state
\[
\ket{C_2}=\frac{1}{2}\bigl(\ket{00}+\ket{01}+\ket{10}-\ket{11}\bigr).
\]
A quadratic form description is
\[
n=2,\quad k=2,\quad h=00,\quad v_1=10,\quad v_2=01,\quad d=(0,0),
\]
and
\[
J=
\begin{bmatrix}
0 & 1\\
0 & 0
\end{bmatrix},
\qquad
\gamma=\frac12.
\]
Then
\[
a_1=a_2=0,
\qquad
B=
\begin{bmatrix}
0 & 1\\
1 & 0
\end{bmatrix}.
\]
So the parity word $p(y)=By$ tracks whether the \emph{other} support bit is active.

Using the standard Gray sequence on two bits, the flip words are
\[
f_1=01,\qquad f_2=10,\qquad f_3=01.
\]
The algorithm evolves as follows (here the lower bit corresponds to $y_1$):
\begin{center}
\small
\begin{tabular}{@{}cccccc@{}}
\toprule
\textnormal{Step} $m$ & \textnormal{Flip} $f_m$ & \makecell{\textnormal{Current}\\\textnormal{support word} $x$} & \makecell{\textnormal{Current}\\\textnormal{parity word} $p$} & \textnormal{Phase exponent} $q$ & \makecell{\textnormal{Written}\\\textnormal{amplitude}} \\
\midrule
$0$ & -- & $00$ & $00$ & $0$ & $+\frac12$ \\
$1$ & $01$ & $10$ & $10$ & $0$ & $+\frac12$ \\
$2$ & $10$ & $11$ & $11$ & $2$ & $-\frac12$ \\
$3$ & $01$ & $01$ & $01$ & $0$ & $+\frac12$ \\
\bottomrule
\end{tabular}
\normalsize
\end{center}
Reading the array in basis order $(00,01,10,11)$ gives
\[
\frac12(1,1,1,-1),
\]
which is exactly the state vector of $\ket{C_2}$.
\end{example}

\subsection{Generalization to odd prime qudit dimensions}
\label{subsec:qudit-materialization}

We now describe the direct qudit analog of the dense vector materialization algorithm for quadratic forms \cite{gottesman1998fault, hostens2005stabilizer, gross2006hudson, beaudrap2013linearized}.
Fix an odd prime qudit dimension \(\ell\). All support arithmetic is over \(\F_\ell\).
The phase arithmetic is carried out in $\F_\ell$. Define
\[
\zeta_\ell = \exp(2\pi \ii/\ell),
\qquad
\rho_\ell = 2^{-1}\in\F_\ell.
\]

We briefly recall the stabilizer convention for qudits.  The one-qudit
computational basis is indexed by
$
\{\ket{j}:j\in\F_\ell\}.
$
The generalized Pauli shift and phase operators are
\[
X_\ell\ket{j}=\ket{j+1},
\qquad
Z_\ell\ket{j}=\zeta_\ell^{\,j}\ket{j},
\]
where the addition \(j+1\) is in \(\F_\ell\).  Hence, for
\(a,b\in\F_\ell\),
\[
X_\ell^a\ket{j}=\ket{j+a},
\qquad
Z_\ell^b\ket{j}=\zeta_\ell^{\,bj}\ket{j}.
\]
For \(n\) qudits and vectors \(\alpha,\beta\in\F_\ell^n\), we write
\[
X(\alpha)=\bigotimes_{r=1}^n X_\ell^{\alpha_r},
\qquad
Z(\beta)=\bigotimes_{r=1}^n Z_\ell^{\beta_r}.
\]
The generalized \(n\)-qudit Pauli operators are, up to scalar powers of
\(\zeta_\ell\), the operators \(X(\alpha)Z(\beta)\).
A qudit stabilizer state is the unique common \(+1\) eigenvector of an
abelian subgroup of this generalized Pauli group generated by \(n\)
independent commuting Pauli operators, with the generator phases chosen so
that the desired state has eigenvalue \(+1\).

The quadratic form expression in Theorem~3.3 of \cite{labib2022stabilizer} implies the following equivalent definition.

\begin{definition}[Odd prime qudit quadratic form]
\label{def:qudit-qf}
Let \(0\le k\le n\).  A prime-dimensional qudit quadratic form description consists of:
\begin{itemize}[leftmargin=1.5em]
\item a shift vector \(h\in\F_\ell^n\),
\item linearly independent basis vectors \(v_1,\dots,v_k\in\F_\ell^n\),
\item a vector \(d\in\F_\ell^k\),
\item an upper-triangular matrix \(J\in\F_\ell^{k\times k}\),
\item a nonzero complex scalar \(\gamma\).
\end{itemize}
For \(y\in\F_\ell^k\), define
\begin{equation}
\phi_\ell(y)
=
\sum_{t=1}^k a_t y_t
+
\sum_{t=1}^k r_t\bigl(y_t^2-y_t\bigr)
+
\sum_{1\le i<j\le k} B_{ij}y_i y_j
\in \F_\ell,
\label{eq:qudit-phase}
\end{equation}
where the derived coefficients are
\begin{equation}
\label{eq:qudit-components}
a_t=d_t+\rho_\ell J_{tt},
\qquad
r_t=\rho_\ell J_{tt},
\end{equation}
and \(B\in\F_\ell^{k\times k}\) is the symmetric zero-diagonal matrix
\[
B_{ij}=
\begin{cases}
J_{ij}, & i<j,\\
0, & i=j,\\
J_{ji}, & i>j.
\end{cases}
\]
The data specify amplitudes
\begin{equation}
\psi_{h+\sum_{t=1}^k y_t v_t}
=\gamma\,\zeta_\ell^{\phi_\ell(y)}
\qquad
\text{for every }y\in\F_\ell^k,
\label{eq:qudit-qf-description}
\end{equation}
and \(\psi_x=0\) for every computational basis label
\(x\notin h+\operatorname{span}\{v_1,\dots,v_k\}\).
For normalized stabilizer states one has \(|\gamma|=\ell^{-k/2}\), possibly including a global phase, but the materialization algorithm only uses \(\gamma\) as the scalar appearing in Eq.~\eqref{eq:qudit-qf-description}.
\end{definition}

\subsubsection{Packed qudit words and Gray transitions}

The same word-RAM model as in the qubit case is used, generalized to a fixed odd prime \(\ell\).  A vector in \(\F_\ell^m\), for \(m\le n\), is stored in a packed word representation.  Componentwise addition, subtraction, multiplication by a fixed element of \(\F_\ell\), reading one specified coordinate, and using a packed label \(x\in\F_\ell^n\) as an output address all take \(O(1)\) word operations.

\begin{figure}
    \centering
    \includegraphics[width=\linewidth]{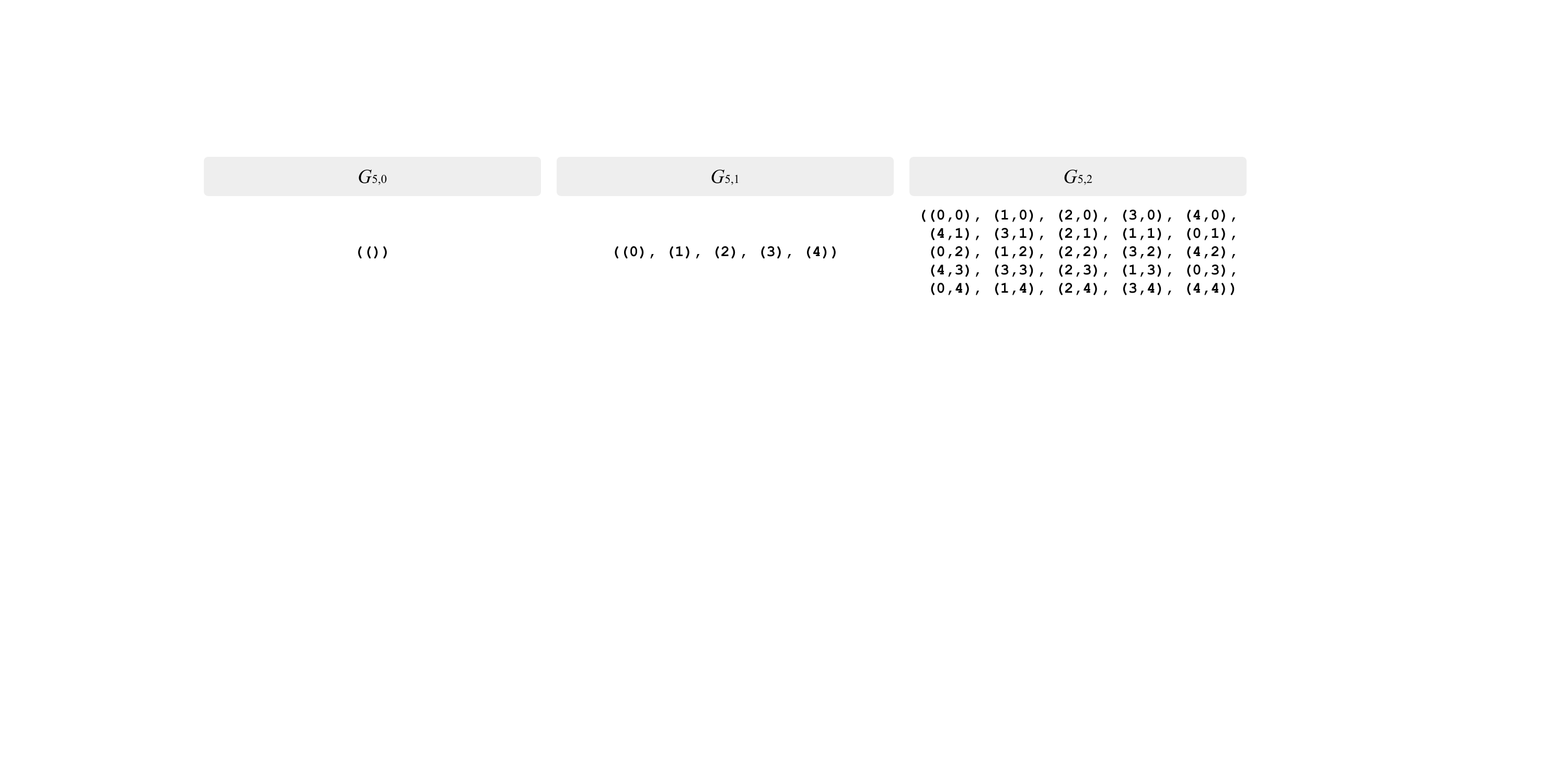}
    \caption{Visualization of the first three iterations in the construction of the reflected Gray list for $\ell=5$.}
    \label{fig:gray-code-example}
\end{figure}

We enumerate \(\F_\ell^k\) by the \(\ell\)-ary reflected Gray list \cite{mutze2022combinatorial}.  Let \(G_{\ell,0}\) be the one-element list containing the empty vector.  For \(k\ge1\), the list \(G_{\ell,k}\) is obtained by concatenating \(\ell\) blocks.  In the block with last coordinate \(a\in\{0,1,\dots,\ell-1\}\), the first \(k-1\) coordinates are listed in the order \(G_{\ell,k-1}\) if \(a\) is even and in the reverse order if \(a\) is odd (see Figure~\ref{fig:gray-code-example} for an example).
Write
\[
G_{\ell,k}=\bigl(y^{(0)},y^{(1)},\dots,y^{(\ell^k-1)}\bigr),
\qquad y^{(0)}=0.
\]
Consecutive elements differ in exactly one coordinate. For each
\(m=1,\dots,\ell^k-1\), there are a one-hot word \(f_m=e_{t_m}\in\F_\ell^k\) and a sign \(\delta_m\in\{+1,-1\}\) such that
\begin{equation}
 y^{(m)}=y^{(m-1)}+\delta_m f_m.
\label{eq:qudit-gray-transition}
\end{equation}
The sign \(\delta_m\) is interpreted in \(\F_\ell\). The transition list
\[
T_{\ell,k}=\bigl((f_m,\delta_m):m=1,\dots,\ell^k-1\bigr)
\]
can be precomputed in \(O(\ell^k)\) time and space for fixed \(\ell\).
For a one-hot word \(f=e_t\) and a packed vector \(u\in\F_\ell^k\), write
\[
\operatorname{coord}_f(u)=u_t.
\]

\subsubsection{The qudit phase update invariant}

As in Definition~\ref{def:interaction}, the off-diagonal part is stored in the symmetric zero-diagonal matrix \(B\).  For \(y\in\F_\ell^k\), define the cached interaction word
\[
p(y)=By\in\F_\ell^k.
\]
Thus
\[
p_t(y)=\sum_{j=1}^k B_{tj}y_j\in\F_\ell.
\]
This is the qudit analog of the parity word in the qubit algorithm.

\begin{lemma}[Qudit phase increment of a single Gray code flip]
\label{lem:qudit-phase-increment}
Let \(y\in\F_\ell^k\), let \(f=e_t\), and let \(\delta\in\{+1,-1\}\).  Then
\begin{equation}
\phi_\ell(y+\delta f)-\phi_\ell(y)
=
\delta a_t
+r_t\bigl(2\delta y_t+\delta^2-\delta\bigr)
+\delta\,p_t(y)
\quad\text{in }\F_\ell.
\label{eq:qudit-phase-increment}
\end{equation}
\end{lemma}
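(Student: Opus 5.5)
We mirror the proof of Lemma~\ref{lem:phase-increment}: set \(y'=y+\delta f=y+\delta e_t\), so \(y'_t=y_t+\delta\) and \(y'_j=y_j\) for \(j\neq t\). We then split \(\phi_\ell\) from Eq.~\eqref{eq:qudit-phase} into its three summands (linear, diagonal quadratic, off-diagonal interaction) and compute the difference of each separately in \(\F_\ell\). Every summand not involving coordinate \(t\) cancels, so only the \(t\)-indexed terms need attention.

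\textbf{Linear part.} The difference is \(a_t(y'_t-y_t)=\delta a_t\).

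\textbf{Diagonal quadratic part.} The difference is
\[
r_t\bigl((y_t+\delta)^2-(y_t+\delta)-y_t^2+y_t\bigr)=r_t\bigl(2\delta y_t+\delta^2-\delta\bigr),
\]
by expanding the square in the commutative ring \(\F_\ell\).

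\textbf{Off-diagonal part.} Only monomials \(B_{ij}y_iy_j\) with \(t\in\{i,j\}\) change, giving
\[
\sum_{j<t}B_{jt}y_j\,\delta+\sum_{t<j}B_{tj}\,\delta\, y_j .
\]
Symmetry of \(B\) and its zero diagonal turn this into \(\delta\sum_{j\neq t}B_{tj}y_j=\delta\,p_t(y)\). Note that \(p_t(y)\) does not depend on \(y_t\), so evaluating it at \(y\) or \(y'\) makes no difference.

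Summing the three contributions gives Eq.~\eqref{eq:qudit-phase-increment}.

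\textbf{Main obstacle.} There is no real obstacle. Unlike the qubit case, we cannot use the mod-\(4\) trick \(2(-1)\equiv2\). The factor \(\delta\) must be kept explicitly in the interaction term, and we should check that no term of the form \(\delta^2 B_{tt}\) survives, which holds because \(B_{tt}=0\). We also note that \(\delta\in\{\pm1\}\) is read as an element of \(\F_\ell\), so all the identities are ordinary polynomial identities over \(\F_\ell\). The value \(\delta^2=1\) is not even needed, although it simplifies \(\delta^2-\delta\) to \(1-\delta\).
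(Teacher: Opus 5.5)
Your proposal is correct and follows the paper's proof essentially step for step. Like the paper, you split \(\phi_\ell\) into its linear, diagonal, and off-diagonal parts, take the difference of each, and use the symmetry and zero diagonal of \(B\) to identify the interaction contribution with \(\delta\,p_t(y)\).
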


\begin{proof}
The linear part contributes \(\delta a_t\).  The diagonal part contributes
\[
r_t\left((y_t+\delta)^2-(y_t+\delta)-y_t^2+y_t\right)
=r_t\bigl(2\delta y_t+\delta^2-\delta\bigr).
\]
The off-diagonal part contributes
\[
\delta
\left(\sum_{j<t}B_{jt}y_j+\sum_{t<j}B_{tj}y_j\right)
=\delta\,p_t(y),
\]
because \(B\) is symmetric and has zero diagonal.  Adding the three contributions gives Eq.~\eqref{eq:qudit-phase-increment}.
\end{proof}

\begin{lemma}
\label{lem:qudit-parity-update}
Let \(c_t\in\F_\ell^k\) be the \(t\)-th column of \(B\).  Then
\[
p(y+\delta e_t)=p(y)+\delta c_t
\]
for every \(y\in\F_\ell^k\) and every \(\delta\in\{+1,-1\}\).
\end{lemma}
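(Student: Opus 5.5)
The claim is pure linearity of the map \(y\mapsto By\) over \(\F_\ell\), so the plan is the qudit analog of the proof of Lemma~\ref{lem:parity-update}.

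First, since \(p(y)=By\) is computed in \(\F_\ell^k\) and matrix–vector multiplication is \(\F_\ell\)-linear, I will expand
\[
p(y+\delta e_t)=B(y+\delta e_t)=By+\delta\,Be_t .
\]
Here \(\delta\in\{+1,-1\}\) is read as an element of \(\F_\ell\), as the paper stipulates after Eq.~\eqref{eq:qudit-gray-transition}, so pulling the scalar \(\delta\) out of the product is legitimate.

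Second, I identify \(Be_t\) with the \(t\)-th column \(c_t\) of \(B\), which gives \(p(y+\delta e_t)=p(y)+\delta c_t\).

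As a coordinatewise sanity check, for every \(s\) we have
\[
p_s(y+\delta e_t)=\sum_j B_{sj}y_j+\delta B_{st}.
\]
This matches the \(s\)-th entry of \(p(y)+\delta c_t\).

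There is no real obstacle. The only point needing care is that addition and subtraction are interpreted in \(\F_\ell\) rather than as XOR. Note also that neither the symmetry nor the zero diagonal of \(B\) is needed for this step; those properties matter only in Lemma~\ref{lem:qudit-phase-increment}.
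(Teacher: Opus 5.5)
Your proposal is correct and is essentially the paper's own proof: the one-line $\F_\ell$-linearity computation $B(y+\delta e_t)=By+\delta Be_t=p(y)+\delta c_t$. Your side remark is also accurate: symmetry and the zero diagonal of $B$ are not used here.
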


\begin{proof}
Since \(p(y)=By\) over \(\F_\ell\),
\[
p(y+\delta e_t)=B(y+\delta e_t)=By+\delta Be_t=p(y)+\delta c_t.
\]
\end{proof}

\subsubsection{Qudit materialization algorithm}

The algorithm follows a similar structure to Algorithm~\ref{alg:main}.  It keeps the current Gray word \(y\), the current support label \(x\), the interaction word \(p=By\), and the current phase exponent \(q=\phi_\ell(y)\).  One-hot lookup tables are again used so that the algorithm never has to extract the index \(t\) from a one-hot word \(f=e_t\).  The tables are indexed by packed words in \(\F_\ell^k\), and only the one-hot locations are filled:
\[
\lookup{A}[e_t]=a_t,
\qquad
\lookup{R}[e_t]=r_t,
\qquad
\lookup{V}[e_t]=v_t,
\qquad
\lookup{C}[e_t]=c_t.
\]

\begin{algorithm}[t]
\caption{}
\label{alg:qudit-main}
\begin{algorithmic}[1]
\Require Odd prime qudit dimension \(\ell\), \(n\), \(k\), shift \(h\in\F_\ell^n\), linearly independent \(v_1,\dots,v_k\in\F_\ell^n\), phase data \(d\in\F_\ell^k\), upper-triangular \(J\in\F_\ell^{k\times k}\), scalar \(\gamma\in\C\setminus\{0\}\)
\Ensure Full array \((\psi_x)_{x\in\F_\ell^n}\)

\State Build \(a_1,\dots,a_k\), \(r_1,\dots,r_k\), and \(B\) from Eq.~\eqref{eq:qudit-components} and Definition~\ref{def:qudit-qf}
\State Build the reflected transition list \(T_{\ell,k}\)
\State Initialize one-hot lookup tables \(\lookup{A},\lookup{R},\lookup{V},\lookup{C}\) of length \(\ell^k\) to zero
\For{\(t=1\) to \(k\)}
    \State \(\lookup{A}[e_t]\gets a_t\)
    \State \(\lookup{R}[e_t]\gets r_t\)
    \State \(\lookup{V}[e_t]\gets v_t\)
    \State \(\lookup{C}[e_t]\gets\) the \(t\)-th column of \(B\)
\EndFor
\State Initialize a length-\(\ell^n\) complex array \(\psi\) to all zeros
\State \(y\gets 0\) \Comment{current Gray word in \(\F_\ell^k\)}
\State \(x\gets h\) \Comment{current support index in \(\F_\ell^n\)}
\State \(p\gets 0\) \Comment{current interaction word \(p(y)=By\) in \(\F_\ell^k\)}
\State \(q\gets 0\) \Comment{current phase exponent in \(\F_\ell\)}
\State \(\psi_x\gets \gamma\)
\For{each transition \((f,\delta)\) in \(T_{\ell,k}\), in order}
    \State \(u\gets \operatorname{coord}_f(y)\)
    \State \(w\gets \operatorname{coord}_f(p)\)
    \State \(q\gets q+\delta\lookup{A}[f]+\lookup{R}[f]\bigl(2\delta u+\delta^2-\delta\bigr)+\delta w\) in \(\F_\ell\)
    \State \(y\gets y+\delta f\)
    \State \(x\gets x+\delta\lookup{V}[f]\)
    \State \(p\gets p+\delta\lookup{C}[f]\)
    \State \(\psi_x\gets \gamma\,\zeta_\ell^q\)
\EndFor
\State \Return \(\psi\)
\end{algorithmic}
\end{algorithm}

\subsubsection{Correctness and complexity}

\begin{theorem}
\label{thm:qudit-correctness}
Algorithm~\ref{alg:qudit-main} returns the full amplitude vector of the qudit stabilizer state specified by the input quadratic form description.
\end{theorem}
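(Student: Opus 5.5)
The proof will follow the same pattern as Theorem~\ref{thm:correctness}: an induction along the reflected Gray list $G_{\ell,k}$, followed by an injectivity argument for the support map. For each $m\in\{0,\dots,\ell^k-1\}$ we set
\[
x^{(m)}=h+\sum_{t=1}^k y^{(m)}_t v_t,\qquad p^{(m)}=By^{(m)},\qquad q^{(m)}=\phi_\ell\bigl(y^{(m)}\bigr).
\]
The inductive claim is that after the $m$-th write (the initial write counts as $m=0$) the algorithm's variables satisfy $y=y^{(m)}$, $x=x^{(m)}$, $p=p^{(m)}$, $q=q^{(m)}$, and that the value written is $\psi_{x^{(m)}}=\gamma\,\zeta_\ell^{q^{(m)}}$.

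\textbf{Base case.} We have $y^{(0)}=0$, so $x^{(0)}=h$ and $p^{(0)}=0$. Every term of $\phi_\ell(0)$ contains a factor $y_t$, so $\phi_\ell(0)=0$. These values match the initialization in Algorithm~\ref{alg:qudit-main}.

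\textbf{Inductive step.} Take the $m$-th transition $(f_m,\delta_m)$ with $f_m=e_t$. By Eq.~\eqref{eq:qudit-gray-transition}, $y^{(m)}=y^{(m-1)}+\delta_m e_t$. By the induction hypothesis, the algorithm reads $u=\operatorname{coord}_f(y)=y^{(m-1)}_t$ and $w=\operatorname{coord}_f(p)=p_t(y^{(m-1)})$. The one-hot tables return $a_t$, $r_t$, $v_t$ and $c_t$. The phase update is then term-for-term the right-hand side of Lemma~\ref{lem:qudit-phase-increment}, so $q$ becomes $q^{(m)}$. The updates of $y$ and $x$ follow from linearity of $y\mapsto h+\sum_t y_t v_t$. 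The update of $p$ is exactly Lemma~\ref{lem:qudit-parity-update}.

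The one point that needs care is the meaning of $\delta\in\{\pm1\}$ and of the coefficient $2$. Both must be read as elements of $\F_\ell$. This is legitimate because $\ell$ is odd, so $\rho_\ell=2^{-1}$ exists and $-1\neq 1$. All phase arithmetic is then consistent modulo $\ell$, and $\zeta_\ell^q$ depends only on $q\bmod\ell$. With these updates, the value written equals the amplitude prescribed by Eq.~\eqref{eq:qudit-qf-description}.

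\textbf{Coverage.} It remains to show that every support point is written exactly once and nothing else is touched. First, $G_{\ell,k}$ lists every element of $\F_\ell^k$ exactly once. We check this by induction on $k$: each block fixes a distinct last coordinate and runs over a permutation of $G_{\ell,k-1}$. Second, since $v_1,\dots,v_k$ are linearly independent over $\F_\ell$, the map $y\mapsto h+\sum_t y_tv_t$ is injective. If two $y$'s gave the same point, subtracting would give a vanishing linear combination with coefficients $y_t-z_t$, forcing $y=z$. The written labels are therefore exactly the affine support, each written once, and all other entries keep their initial value $0$, as required.

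The main obstacle is not the algebra but confirming the assumed property of $T_{\ell,k}$: that consecutive entries of the reflected list differ by exactly $\pm e_t$, rather than by an arbitrary jump in one coordinate. I would prove this by induction on $k$, using the reflected structure. Inside a block, transitions are those of $G_{\ell,k-1}$ or of its reversal, and reversing a transition $+\delta e_t$ gives $-\delta e_t$. At a block boundary, the first $k-1$ coordinates agree, because the reversed list starts where the previous one ended, and the last coordinate goes from $a$ to $a+1$. So every transition has the form $\pm e_t$.
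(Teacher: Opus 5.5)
Your proposal is correct and follows the paper's proof essentially step for step. It uses the same induction along $G_{\ell,k}$ with invariants $y,x,p,q$, the same appeal to Lemmas~\ref{lem:qudit-phase-increment} and~\ref{lem:qudit-parity-update} for the updates, and the same injectivity argument for $y\mapsto h+\sum_t y_tv_t$. The only addition is that you verify, by induction on $k$, that the reflected list is a bijection onto $\F_\ell^k$ with transitions of the form $\pm e_t$, whereas the paper simply asserts these properties when introducing $T_{\ell,k}$.
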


\begin{proof}
Let
\[
G_{\ell,k}=\bigl(y^{(0)},y^{(1)},\dots,y^{(\ell^k-1)}\bigr)
\]
be the reflected Gray list, and write its transitions as in Eq.~\eqref{eq:qudit-gray-transition}.  For each \(m\), define
\[
x^{(m)}=h+\sum_{t=1}^k y_t^{(m)}v_t\in\F_\ell^n,
\qquad
q^{(m)}=\phi_\ell\bigl(y^{(m)}\bigr)\in \F_\ell,
\qquad
p^{(m)}=B y^{(m)}\in\F_\ell^k.
\]
We prove by induction on \(m\) that after the write corresponding to \(y^{(m)}\), the algorithmic variables satisfy
\[
y=y^{(m)},
\qquad
x=x^{(m)},
\qquad
p=p^{(m)},
\qquad
q=q^{(m)},
\]
and the array entry written at that moment is
\[
\psi_{x^{(m)}}=\gamma\,\zeta_\ell^{q^{(m)}}.
\]

\textbf{Base case $m=0$.} The reflected Gray list starts at \(y^{(0)}=0\).  The algorithm initializes \(y=0\), \(x=h\), \(p=0\), and \(q=0\), and writes \(\psi_h=\gamma\).  Since \(\phi_\ell(0)=0\) and \(B0=0\), the claim holds.

\textbf{Induction step.} Assume the claim holds after the \((m-1)\)-st write.  Let the next transition be \((f_m,\delta_m)\) with \(f_m=e_t\).  Then
\[
y^{(m)}=y^{(m-1)}+\delta_m e_t.
\]
By the induction hypothesis, the current variables before the update are \(y=y^{(m-1)}\), \(p=p^{(m-1)}\), and \(q=q^{(m-1)}\).  The algorithm reads
\[
u=y_t^{(m-1)},
\qquad
w=p_t\bigl(y^{(m-1)}\bigr).
\]
By Lemma~\ref{lem:qudit-phase-increment}, the update of \(q\) is exactly
\[
q\gets q+
\phi_\ell\bigl(y^{(m)}\bigr)-
\phi_\ell\bigl(y^{(m-1)}\bigr),
\]
so the new phase exponent is \(q^{(m)}\).  Updating the support word gives
\[
y\gets y+\delta_m e_t=y^{(m)}.
\]
Since \(\lookup{V}[f_m]=v_t\), updating the support label gives
\[
x\gets x+\delta_m v_t
=h+\sum_{s=1}^k y_s^{(m)}v_s
=x^{(m)}.
\]
Since \(\lookup{C}[f_m]=c_t\), Lemma~\ref{lem:qudit-parity-update} gives
\[
p\gets p+\delta_m c_t
=p\bigl(y^{(m)}\bigr)
=p^{(m)}.
\]
The next write is therefore
\[
\psi_{x^{(m)}}=\gamma\,\zeta_\ell^{q^{(m)}},
\]
which is precisely Eq.~\eqref{eq:qudit-qf-description} at the support label indexed by \(y^{(m)}\).

The reflected Gray list visits every element of \(\F_\ell^k\) exactly once.  Because \(v_1,\dots,v_k\) are linearly independent over \(\F_\ell\), the map
\[
y\longmapsto h+\sum_{t=1}^k y_t v_t
\]
is injective.  Hence every support label is written exactly once.  All entries outside the affine support are initialized to zero and never written.  The returned array is therefore the correct full amplitude vector.
\end{proof}

\begin{theorem}
\label{thm:qudit-complexity}
For fixed odd prime \(\ell\), Algorithm~\ref{alg:qudit-main} runs in
$
O(\ell^n+\ell^k+k^2)
$
time and uses
$
O(\ell^n+\ell^k+k^2)
$
space on the packed word-RAM model.
\end{theorem}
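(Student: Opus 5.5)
The proof will follow the cost accounting in the proof of Theorem~\ref{thm:complexity}, with $\ell$ treated as a constant throughout. I will split the running time of Algorithm~\ref{alg:qudit-main} into four parts: preprocessing, construction of the transition list, initialization of the tables and the output, and the main loop. Each part will be bounded separately and the bounds added.

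\textbf{Preprocessing.} Computing $a_t=d_t+\rho_\ell J_{tt}$ and $r_t=\rho_\ell J_{tt}$ for all $t$ costs $O(k)$ field operations. This uses the fact that $\rho_\ell=2^{-1}$ is a fixed constant once $\ell$ is fixed. Building the symmetric zero-diagonal matrix $B$ costs $O(k^2)$ time and space. Packing its $k$ columns $c_t$ into words costs at most $O(k^2)$ time as well.

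\textbf{Transition list.} This is the step that needs care, since the paper only asserts that $T_{\ell,k}$ can be precomputed in $O(\ell^k)$ time. I would justify it by a recursive or iterative construction that follows the definition of $G_{\ell,k}$ directly.
\begin{itemize}
\item The transitions of $G_{\ell,k}$ consist of the transitions of $G_{\ell,k-1}$ (forward when the last coordinate $a$ is even, reversed with negated signs when $a$ is odd), followed by a single transition $(e_k,+1)$ between consecutive blocks.
\item The number of transitions satisfies $|T_{\ell,k}|=\ell\,|T_{\ell,k-1}|+(\ell-1)$, which gives $\ell^k-1$.
\item Each transition is written in $O(1)$ time, provided the packed one-hot word $e_k$ is formed in $O(1)$ by a constant-multiplier shift.
\item Summed over levels, the total work is $O(\sum_{j\le k}\ell^j)=O(\ell^k)$, because $\ell\ge 3$ makes the sum geometric.
\end{itemize}
A simpler alternative is to note that the transitions can be emitted directly from a mixed-radix counter. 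Its amortized carry cost is $O(1)$, again by the geometric-series argument.

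\textbf{Tables and output.} The tables $\lookup{A},\lookup{R},\lookup{V},\lookup{C}$ are indexed by packed words in $\F_\ell^k$. I will assume the packed encoding embeds $\F_\ell^k$ into an index range of size $O(\ell^k)$, for instance via base-$\ell$ integer encoding in the word-RAM model with fixed $\ell$. Zero-initializing the tables and filling the $k$ one-hot entries then costs $O(\ell^k+k)$. Zero-initializing the output array costs $O(\ell^n)$. These arrays, together with $T_{\ell,k}$ and $B$, give the space bound $O(\ell^n+\ell^k+k^2)$.

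\textbf{Main loop.} The loop runs exactly $\ell^k-1$ times. Each iteration performs a constant number of operations:
\begin{itemize}
\item two coordinate reads, $\operatorname{coord}_f$;
\item four table lookups;
\item arithmetic in $\F_\ell$, where $\delta\in\{\pm1\}$ so $\delta^2-\delta$ is a constant;
\item three packed vector additions, each scaled by $\pm1$;
\item one output write using $\zeta_\ell^q$, taken from a precomputed table of the $\ell$ powers of $\zeta_\ell$ that costs $O(\ell)=O(1)$ to build.
\end{itemize}
By the packed word-RAM assumptions, each of these costs $O(1)$, so the loop costs $O(\ell^k)$.

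Summing the four parts gives time $O(\ell^n+\ell^k+k^2)$. Since $k\le n$, this simplifies to $O(\ell^n)$. The main obstacle is the $O(\ell^k)$ bound on building $T_{\ell,k}$ together with the packed indexing convention for the tables. I would state both explicitly, as facts about the model with fixed $\ell$, rather than leave them implicit.
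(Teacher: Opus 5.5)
Your proposal is correct and uses the same accounting as the paper's proof: $O(k^2)$ to build $a$, $r$ and $B$; $O(\ell^k)$ for the transition list $T_{\ell,k}$ and the one-hot tables; $O(\ell^n)$ for the output array; and $O(1)$ work in each of the $\ell^k-1$ loop iterations. The paper only asserts the $O(\ell^k)$ cost of building $T_{\ell,k}$ and the length-$\ell^k$ packed indexing of the tables, so your recursive block construction (with $|T_{\ell,k}|=\ell\,|T_{\ell,k-1}|+\ell-1$) and your explicit encoding assumption add justification without changing the route.
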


\begin{proof}
Building \(a_1,\dots,a_k\), \(r_1,\dots,r_k\), and \(B\) from \(d\) and \(J\) costs \(O(k^2)\) time.  Constructing the reflected transition list costs \(O(\ell^k)\) time and space (this one-time precomputation can be excluded from the online complexity).  The one-hot lookup tables have length \(\ell^k\), and only \(k\) nonzero locations are filled, so their initialization costs \(O(\ell^k+k)\) time and \(O(\ell^k)\) space.  Initializing the dense output array costs \(O(\ell^n)\) time and \(O(\ell^n)\) space.

The main loop has \(\ell^k-1\) iterations.  Each iteration performs a constant number of packed coordinate reads, scalar operations in \(\F_\ell\), packed vector additions in \(\F_\ell^k\) or \(\F_\ell^n\), table lookups, and one output write.  By the packed word-RAM convention, each iteration costs \(O(1)\) word operations, so the loop costs \(O(\ell^k)\) time.

Adding the preprocessing, initialization, and loop costs gives
$
O(k^2)+O(\ell^k)+O(\ell^n)+O(\ell^k)
=O(\ell^n+\ell^k+k^2),
$
and the same accounting gives the space bound.
\end{proof}

\begin{corollary}
Algorithm~\ref{alg:qudit-main} is asymptotically optimal for the explicit output task of converting odd prime qudit quadratic form stabilizer data to a full state vector.
\end{corollary}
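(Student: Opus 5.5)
The plan is to sandwich the running time between matching upper and lower bounds of order $\ell^n$, mirroring the qubit corollary.

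For the upper bound, I will invoke Theorem~\ref{thm:qudit-complexity}, which gives $O(\ell^n+\ell^k+k^2)$ time for fixed odd prime $\ell$. Since $0\le k\le n$, we have $\ell^k\le \ell^n$. Also $k^2\le n^2 = O(\ell^n)$, because $\ell\ge 3$ makes $\ell^n$ grow exponentially while $n^2$ is polynomial. So the bound collapses to $O(\ell^n)$. Correctness of the output is supplied by Theorem~\ref{thm:qudit-correctness}, so this upper bound applies to an algorithm that actually solves the task.

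For the lower bound, the task demands the full array $(\psi_x)_{x\in\F_\ell^n}$, which has $\ell^n$ entries. In the word-RAM model each unit-cost operation writes at most one output cell, so any algorithm producing the full vector must spend $\Omega(\ell^n)$ time. The only point to be careful about is that the output really means the full dense array, including the zero entries outside the affine support. This is exactly the problem as posed, and Algorithm~\ref{alg:qudit-main} explicitly initializes and returns all $\ell^n$ entries. So the lower bound counts output size, not support size $\ell^k$.

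Combining the two bounds gives $\Theta(\ell^n)$, which is the claimed asymptotic optimality. No step is a genuine obstacle. The only thing to state explicitly is the absorption $k^2=O(\ell^n)$, with the constant depending on the fixed $\ell$ only mildly, since in fact $k^2\le n^2\le 3^n\le \ell^n$ for all $n\ge 0$.
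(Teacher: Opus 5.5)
Your proposal is correct and follows the paper's proof: the upper bound comes from Theorem~\ref{thm:qudit-complexity}, and the matching $\Omega(\ell^n)$ lower bound comes from the size of the output array, giving $\Theta(\ell^n)$. Your explicit absorption of the $\ell^k+k^2$ terms into $O(\ell^n)$ only spells out a step the paper leaves implicit.
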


\begin{proof}
Theorem~\ref{thm:qudit-complexity} gives an \(O(\ell^n)\) upper bound.
Conversely, the task of writing an
output vector of length $\ell^n$ imposes a trivial lower bound of $\Omega(\ell^n)$.
Therefore the running time is \(\Theta(\ell^n)\).
\end{proof}

\section{Optimal materialization of Clifford gates}

One of the motivating applications listed in Section~\ref{sec:intro} was the construction of an explicit
Clifford matrix from a compact description. We now show that a similar Gray-walk approach
gives an optimal algorithm for that task as well. The first step is an optimal dense routine for
applying a Pauli operator to a state vector.

\subsection{Applying a Pauli operator to a dense vector}

For $u,w \in \F_2^n$, define
\begin{equation}
\label{eq:Xw_and_Zu_definition}
X(w) = X^{w_1} \otimes \cdots \otimes X^{w_n},
\qquad
Z(u) = Z^{u_1} \otimes \cdots \otimes Z^{u_n},
\end{equation}
where $X^0 = Z^0 = I$ and $X^1 = X$, $Z^1 = Z$. Every $n$-qubit Pauli operator can be written
in the form
\[
P = \omega X(w) Z(u),
\qquad
\omega \in \{\pm 1, \pm i\}.
\]
Also let $\epsilon_t \in \F_2^n$ denote the $n$-bit standard basis vector with a $1$ in coordinate
$t$ and $0$ elsewhere.

\begin{fact}
For every $x,u,w \in \F_2^n$,
\[
X(w) Z(u) |x\rangle = (-1)^{u^\top x} |x \oplus w\rangle.
\]
Consequently, if
\[
|\psi\rangle = \sum_{x \in \F_2^n} \psi_x |x\rangle,
\]
and $|\phi\rangle = P|\psi\rangle$, then
\begin{equation}
\label{eq:apply_pauli}
\phi_{x \oplus w} = \omega (-1)^{u^\top x} \psi_x
\qquad \text{for every } x \in \F_2^n.
\end{equation}
\end{fact}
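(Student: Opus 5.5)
The plan is a direct computation on basis vectors followed by linearity. First I establish the single-qubit identities $Z^{u_j}\ket{x_j} = (-1)^{u_j x_j}\ket{x_j}$ and $X^{w_j}\ket{x_j} = \ket{x_j \xor w_j}$ for $x_j,u_j,w_j\in\F_2$. Each follows by checking the two cases $u_j\in\{0,1\}$ (respectively $w_j\in\{0,1\}$) against the explicit matrices $Z$ and $X$ from the preliminaries.

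Next I tensor these identities together using Eq.~\eqref{eq:Xw_and_Zu_definition}. This gives
\[
Z(u)\ket{x}=\prod_{j=1}^n(-1)^{u_jx_j}\ket{x}=(-1)^{u^\top x}\ket{x}.
\]
Here the only point needing care is that $\sum_j u_jx_j$ need only be taken modulo $2$, because it sits in an exponent of $-1$. So $u^\top x$ may be read either as an integer or as an element of $\F_2$. Then $X(w)$ acts factorwise and sends $\ket{x}$ to $\ket{x\xor w}$, which proves the first claim.

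For the consequence, I expand by linearity:
\[
P\ket{\psi}=\omega\sum_x \psi_x(-1)^{u^\top x}\ket{x\xor w}.
\]
The map $x\mapsto x\xor w$ is an involutive bijection of $\F_2^n$, so each basis label $z$ arises from exactly one $x$, namely $x=z\xor w$. Reading off the coefficient of $\ket{x\xor w}$ therefore gives Eq.~\eqref{eq:apply_pauli} with no collisions or double counting.

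The only mildly delicate step is this last reindexing. One has to argue that because the map is a bijection, the coefficient of each output basis vector is a single term rather than a sum. Everything else is routine.
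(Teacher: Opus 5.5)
Your proof is correct. It is the routine verification the paper relies on: the factorwise action of $Z^{u_j}$ and $X^{w_j}$ on $\ket{x_j}$, then linearity and the reindexing $x\mapsto x\xor w$ (the paper states this as a Fact and gives no proof). Your remarks that $u^\top x$ matters only modulo $2$ and that the bijection leaves a single term per output coefficient handle the only points needing care.
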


The next routine is the linear-phase analog of Algorithm~\ref{alg:main}. The support is now the whole cube $\F_2^n$, so the Gray walk runs over all basis labels. The only state invariant that must be cached is the parity $u^\top x$.

\begin{algorithm}[t]
\caption{}
\label{alg:pauli_to_vec}
\begin{algorithmic}[1]
\Require Dense array $(\psi_x)_{x \in \F_2^n}$, Pauli data $u,w \in \F_2^n$, $\omega \in \{\pm 1, \pm i\}$
\Ensure Dense array $(\phi_x)_{x \in \F_2^n}$ for $|\phi\rangle = \omega X(w) Z(u) |\psi\rangle$
\State Initialize a one-hot lookup table $\lookup{L}$ of length $2^n$ to zero
\For{$t = 1$ to $n$}
    \State $\lookup{L}[2^{t-1}] \leftarrow u_t$
\EndFor
\State Initialize a length-$2^n$ complex array $\phi$ to all zeros
\State $x \leftarrow 0$ \hfill $\triangleright$ current Gray word in $\F_2^n$
\State $z \leftarrow w$ \hfill $\triangleright$ current output index $z = x \oplus w$
\State $s \leftarrow 0$ \hfill $\triangleright$ current parity $u^\top x \in \F_2$
\State $\phi_z \leftarrow \omega \psi_x$
\For{$m = 1$ to $2^n - 1$}
    \State $f \leftarrow m\ \&\ (-m)$ \hfill $\triangleright$ one-hot Gray code flip word
    \State $x \leftarrow x \oplus f$
    \State $z \leftarrow z \oplus f$
    \State $s \leftarrow s \oplus \lookup{L}[f]$
    \If{$s = 0$}
        \State $\phi_z \leftarrow \omega \psi_x$
    \Else
        \State $\phi_z \leftarrow -\omega \psi_x$
    \EndIf
\EndFor
\State \textbf{return} $\phi$
\end{algorithmic}
\end{algorithm}

\begin{theorem}[Correctness and complexity of Algorithm~\ref{alg:pauli_to_vec}]
\label{thm:pauli_to_stab_correctness_and_complexity}
Algorithm~\ref{alg:pauli_to_vec} returns the dense vector of $P|\psi\rangle$ and runs in time $O(2^n)$ with space
$O(2^n)$. In particular, it is asymptotically optimal for this explicit output task.
\end{theorem}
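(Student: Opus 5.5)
The proof will mirror those of Theorem~\ref{thm:correctness} and Theorem~\ref{thm:complexity}, specialised to a purely linear phase. Correctness rests on an induction over the Gray traversal of all of $\F_2^n$. Set $x^{(m)}=g(m)\in\F_2^n$, $z^{(m)}=x^{(m)}\xor w$ and $s^{(m)}=u^\top x^{(m)} \bmod 2$. The invariant is that after the $m$-th write the variables satisfy $x=x^{(m)}$, $z=z^{(m)}$ and $s=s^{(m)}$, and the entry just written is $\phi_{z^{(m)}}=\omega(-1)^{s^{(m)}}\psi_{x^{(m)}}$.

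For the base case, $x=0$, $z=w$ and $s=0=u^\top 0$, and the algorithm writes $\phi_w=\omega\psi_0$. For the induction step, Fact~\ref{fact:gray}(ii) gives $x^{(m)}=x^{(m-1)}\xor\epsilon_t$, where $t$ is the coordinate with one-hot word $f_m=m\mathbin{\&}(-m)$. Hence $x\gets x\xor f$ and $z\gets z\xor f$ produce $x^{(m)}$ and $x^{(m)}\xor w=z^{(m)}$. The parity update is the degenerate case $B=0$ of Lemmas~\ref{lem:phase-increment} and~\ref{lem:parity-update}: since $u^\top(x\xor\epsilon_t)=u^\top x\xor u_t$ and $\lookup{L}[2^{t-1}]=u_t$, we get $s\gets s\xor\lookup{L}[f]=s^{(m)}$. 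The branch on $s$ then writes $\omega(-1)^{s^{(m)}}\psi_{x^{(m)}}$, which is exactly Eq.~\eqref{eq:apply_pauli}.

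The covering argument replaces the injectivity step of Theorem~\ref{thm:correctness}. By Fact~\ref{fact:gray}(i) the words $x^{(m)}$ run over all of $\F_2^n$ exactly once. The map $x\mapsto x\xor w$ is a bijection of $\F_2^n$, so every output index $z$ is written exactly once, and with the value prescribed by Eq.~\eqref{eq:apply_pauli}. No zero entries remain, and no entry is overwritten. The Fact preceding the algorithm identifies this array as $P\ket{\psi}$, which completes correctness.

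The complexity count follows Theorem~\ref{thm:complexity}. Building $\lookup{L}$ costs $O(2^n+n)$, initialising $\phi$ costs $O(2^n)$, and the loop runs $2^n-1$ times. Each iteration uses $O(1)$ word operations: one lowest-bit extraction, two XORs on $n$-bit words, one table lookup with a one-bit XOR, one test, and one complex multiplication by $\omega$ (at most a sign or a swap of real and imaginary parts) followed by a write. The total time is therefore $O(2^n)$, and the space is $O(2^n)$ for $\lookup{L}$ and $\phi$. For optimality, the output has $2^n$ entries, which gives the trivial $\Omega(2^n)$ lower bound, so the bound is $\Theta(2^n)$.

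None of this is hard. The only point needing care is that every output slot is covered, since this algorithm, unlike Algorithm~\ref{alg:main}, relies on the translation $x\mapsto x\xor w$ rather than an injective affine embedding. The invariant must also record the relation $z=x\xor w$ explicitly, so that the updates of $x$ and $z$ stay synchronised.
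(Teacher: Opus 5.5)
Your proposal is correct and follows essentially the same argument as the paper. Both prove correctness by induction along the Gray traversal with the invariants $x=x^{(m)}$, $z=x^{(m)}\oplus w$, $s=u^\top x^{(m)}$, use the bijection $x\mapsto x\oplus w$ to show every output slot is written exactly once, and obtain the bound from an $O(1)$-per-iteration count together with the trivial $\Omega(2^n)$ output-size lower bound.
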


\begin{proof}
Let $g(0),g(1),\dots,g(2^n - 1)$ be the binary reflected Gray code from Fact~\ref{fact:gray}. For each $m$,
define
\[
x^{(m)} = g(m) \in \F_2^n,
\qquad
z^{(m)} = x^{(m)} \oplus w \in \F_2^n,
\qquad
s^{(m)} = u^\top x^{(m)} \in \F_2.
\]
We prove by induction on $m$ that after the $m$-th write of the algorithm (counting the initial
write as $m=0$), the internal variables satisfy
\[
x = x^{(m)},
\qquad
z = z^{(m)},
\qquad
s = s^{(m)},
\]
and the array entry written at that moment is exactly
\[
\phi_{z^{(m)}} = \omega (-1)^{s^{(m)}} \psi_{x^{(m)}}.
\]

\textbf{Base case $m=0$.} The algorithm sets $x=0$, $z=w$, and $s=0$, then writes
$\phi_w = \omega \psi_0$. Since $g(0)=0$, we have $x^{(0)} = 0$, $z^{(0)} = w$, and
$s^{(0)} = u^\top 0 = 0$. So the claim holds at $m=0$.

\textbf{Induction step.} Assume the claim holds after the $(m-1)$-st write, where
$1 \le m \le 2^n - 1$. By Fact~\ref{fact:gray},
\[
x^{(m)} = x^{(m-1)} \oplus \epsilon_t
\]
for the unique coordinate $t$ whose one-hot word is
\[
f_m = g(m) \oplus g(m-1) = m\ \&\ (-m).
\]
Because the current algorithmic variable is $x = x^{(m-1)}$, the update
\[
x \leftarrow x \oplus f_m
\]
sets $x$ to $x^{(m)}$. Since $z = x \oplus w$, the simultaneous update
\[
z \leftarrow z \oplus f_m
\]
sets $z$ to $z^{(m)}$.

It remains to check the parity bit. Only the $t$-th bit changes when one passes from
$x^{(m-1)}$ to $x^{(m)}$, so
\[
s^{(m)} = u^\top x^{(m)} \equiv u^\top x^{(m-1)} + u_t \pmod 2.
\]
By construction $\lookup{L}[f_m] = u_t$, and by the induction hypothesis the current variable is
$s = s^{(m-1)} = u^\top x^{(m-1)}$. Therefore the update
\[
s \leftarrow s \oplus \lookup{L}[f_m]
\]
sets $s$ to $s^{(m)}$.

The subsequent write is therefore
\[
\phi_{z^{(m)}} = \omega (-1)^{s^{(m)}} \psi_{x^{(m)}},
\]
which is exactly the amplitude prescribed by Eq.~\eqref{eq:apply_pauli}.

Finally, Fact~\ref{fact:gray}(i) says that the Gray code visits every $x \in \F_2^n$ exactly once. The map
$x \mapsto x \oplus w$ is a bijection of $\F_2^n$, so every output position is written exactly once.
Hence the returned array is precisely the dense vector of $P|\psi\rangle$.

For the complexity, initializing the lookup table and the output array costs $O(2^n)$ time and
space. The Gray loop has exactly $2^n - 1$ iterations. Each iteration performs a constant number
of word operations and one array write, so the loop costs $O(2^n)$ time. Thus the total running
time is $O(2^n)$ and the space usage is $O(2^n)$.
\end{proof}

\subsection{Dense Clifford matrix expansion}

A Clifford operator on $n$ qubits is a unitary $C$ such that $CPC^\dagger$ is a Pauli operator for
every $n$-qubit Pauli operator $P$ \cite{farinholt2014ideal}. For $1 \le t \le n$, define
\[
Z_t = I^{\otimes (t-1)} \otimes Z \otimes I^{\otimes (n-t)},
\qquad
X_t = I^{\otimes (t-1)} \otimes X \otimes I^{\otimes (n-t)}.
\]
The standard tableau of $C$ is the list of Pauli operators $U_t$ and $V_t$ defined in Eq.~\eqref{eq:Ut_and_Vt_definition}.
We assume that each $V_t$ is stored in the same binary form as above, namely by its phase and its
$X$- and $Z$-patterns.

\begin{lemma}[Gray-step recurrence for Clifford columns]
\label{lemma:column_recurrence}
Let $C$ be an $n$-qubit Clifford operator, and for each $x \in \F_2^n$ let
\[
c_x = C |x\rangle
\]
denote the $x$-th column of its full matrix. Then for every $x \in \F_2^n$ and every
$t \in \{1,\dots,n\}$,
\[
c_{x \oplus \epsilon_t} = V_t c_x.
\]
\end{lemma}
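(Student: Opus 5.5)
The plan is a direct computation from the definition of the tableau in Eq.~\eqref{eq:Ut_and_Vt_definition}. The single-qubit operator $X_t$ flips bit $t$ of a computational basis label, so $X_t\ket{x}=\ket{x\oplus\epsilon_t}$ for every $x\in\F_2^n$. This is an instance of the Fact preceding Algorithm~\ref{alg:pauli_to_vec}, with $w=\epsilon_t$ and $u=0$. Applying $C$ to both sides gives
\[
c_{x\oplus\epsilon_t}=C\ket{x\oplus\epsilon_t}=C X_t\ket{x}.
\]

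Next I would insert the identity $C^\dagger C=I$, which holds because $C$ is unitary. This yields
\[
CX_t\ket{x}=CX_tC^\dagger C\ket{x}=V_t\,C\ket{x}=V_t c_x,
\]
using the definition $V_t=CX_tC^\dagger$. That is exactly the claimed recurrence, and it holds for every $x$ and every $t$ with no case split.

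I expect no real obstacle. The only point needing care is the phase convention. The identity holds exactly for the operator $V_t$ as defined in Eq.~\eqref{eq:Ut_and_Vt_definition}, including its phase $\omega\in\{\pm1,\pm\ii\}$. So the stored tableau must record that phase faithfully, as assumed in the sentence before the lemma.

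Under the projective remark, $C$ is fixed only up to a global phase. This does not affect the argument, because $CX_tC^\dagger$ is invariant under $C\mapsto e^{\ii\theta}C$. Hence the recurrence holds for whichever representative $C$ is materialized, and the phase choice only enters through the first column $c_0$.
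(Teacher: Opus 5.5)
Your proposal is correct and is essentially the paper's proof: write $c_{x\oplus\epsilon_t}=CX_t\ket{x}$, insert $C^\dagger C=I$, and recognize $V_t=CX_tC^\dagger$. Your added remarks are also accurate and consistent with the paper's projective convention. The phase of $V_t$ must be stored faithfully, and $CX_tC^\dagger$ is unchanged by a global phase on $C$.
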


\begin{proof}
Because $X_t |x\rangle = |x \oplus \epsilon_t\rangle$, one has
\[
c_{x \oplus \epsilon_t} = C |x \oplus \epsilon_t\rangle = C X_t |x\rangle.
\]
Insert $I = C^\dagger C$ between $X_t$ and $|x\rangle$:
\[
C X_t |x\rangle = C X_t C^\dagger C |x\rangle = V_t c_x.
\]
\end{proof}

\begin{algorithm}[t]
\caption{}
\label{alg:cliff_mat}
\begin{algorithmic}[1]
\Require Clifford tableau $(U_1,V_1,\dots,U_n,V_n)$ and the first dense column $c_0 = C|0^n\rangle$
\Ensure Full $2^n \times 2^n$ complex matrix of $C$
\State Initialize a one-hot lookup table $\lookup{P}$ of length $2^n$ to zero
\For{$t = 1$ to $n$}
    \State $\lookup{P}[2^{t-1}] \leftarrow V_t$
\EndFor
\State Initialize a $2^n \times 2^n$ complex array $M$
\State $x \leftarrow 0$ \hfill $\triangleright$ current column label in $\F_2^n$
\State $c \leftarrow c_0$ \hfill $\triangleright$ current dense column $c = C|x\rangle$
\State Write $c$ as column $x$ of $M$
\For{$m = 1$ to $2^n - 1$}
    \State $f \leftarrow m\ \&\ (-m)$ \hfill $\triangleright$ one-hot Gray code flip word
    \State $x \leftarrow x \oplus f$
    \State $c \leftarrow$ the output of Algorithm~\ref{alg:pauli_to_vec} on input vector $c$ and Pauli data $\lookup{P}[f]$
    \State Write $c$ as column $x$ of $M$
\EndFor
\State \textbf{return} $M$
\end{algorithmic}
\end{algorithm}

\begin{theorem}
\label{thm:cliff_with_first_col}
Given the tableau of an $n$-qubit Clifford operator $C$ and its first dense column
$c_0 = C|0^n\rangle$, Algorithm~\ref{alg:cliff_mat} returns the full $2^n \times 2^n$ matrix of $C$ in time $O(4^n)$.
This bound is asymptotically optimal.
\end{theorem}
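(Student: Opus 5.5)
The proof splits into correctness, running time, and optimality. For correctness, follow the pattern of Theorem~\ref{thm:correctness}: let $g(0),\dots,g(2^n-1)$ be the binary reflected Gray code. We prove by induction on $m$ that after the $m$-th column write (counting the initial write as $m=0$) the variables satisfy $x=g(m)$ and $c=C\ket{g(m)}=c_{g(m)}$. The base case is immediate because $g(0)=0$ and $c$ is initialized to the supplied column $c_0$.

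For the induction step, Fact~\ref{fact:gray}(ii) gives $g(m)=g(m-1)\xor\epsilon_t$, where $f_m=m\mathbin{\&}(-m)=2^{t-1}$. The lookup therefore returns $\lookup{P}[f_m]=V_t$. Theorem~\ref{thm:pauli_to_stab_correctness_and_complexity} says the call to Algorithm~\ref{alg:pauli_to_vec} outputs the dense vector $V_t c_{g(m-1)}$. Lemma~\ref{lemma:column_recurrence} identifies this vector as $c_{g(m-1)\xor\epsilon_t}=c_{g(m)}$. By Fact~\ref{fact:gray}(i) every label $x\in\F_2^n$ is visited exactly once, so every column of $M$ is written exactly once and with the correct content.

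One point needs explicit care. Algorithm~\ref{alg:pauli_to_vec} takes input of the form $(u,w,\omega)$ with $P=\omega X(w)Z(u)$, while the tableau stores $V_t$ through its sign together with its $X$- and $Z$-patterns in the convention of Definition~\ref{def:check-matrix}, where $Y$ appears literally. So each $V_t$ must be rewritten as $\omega_t X(w^{(t)})Z(u^{(t)})$. Using $Y=\ii XZ$ qubit by qubit, $\omega_t$ equals the stored sign times $\ii^{|w^{(t)}\wedge u^{(t)}|}$. This conversion is done once per $t$ during table filling, at cost $O(n)$ per generator (or $O(1)$ with a popcount table of size $2^n$). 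The total, $O(n^2)$, is absorbed. It is also convenient to precompute the one-hot table $\lookup{L}$ for each $V_t$ rather than rebuilding it inside every call. Even rebuilding costs only $O(2^n)$ per call, however, so it is harmless.

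For the running time, the table $\lookup{P}$ costs $O(2^n+n^2)$ to build and allocating $M$ costs $O(4^n)$. The loop runs $2^n-1$ times. Each iteration does $O(1)$ word operations, one call to Algorithm~\ref{alg:pauli_to_vec} costing $O(2^n)$ by Theorem~\ref{thm:pauli_to_stab_correctness_and_complexity}, and one column write costing $O(2^n)$. The total is $O(4^n)$. For optimality, the output has $4^n$ entries, each of which must be written, which gives $\Omega(4^n)$; hence the bound is $\Theta(4^n)$.

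The main obstacle is essentially bookkeeping: matching the phase conventions of $V_t$ to the $\omega X(w)Z(u)$ form consumed by Algorithm~\ref{alg:pauli_to_vec}, and ensuring no hidden per-call $n$-factor creeps in. The mathematical content is entirely supplied by Lemma~\ref{lemma:column_recurrence} and Theorem~\ref{thm:pauli_to_stab_correctness_and_complexity}.
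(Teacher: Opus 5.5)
Your proof is correct and follows the paper's argument essentially step for step. It uses the same Gray-code induction on $m$ with $\lookup{P}[f_m]=V_t$, the same induction step combining Theorem~\ref{thm:pauli_to_stab_correctness_and_complexity} with Lemma~\ref{lemma:column_recurrence}, and the same cost count of $2^n-1$ calls at $O(2^n)$ each plus $O(4^n)$ output initialization. Your explicit conversion of each $V_t$ from the literal-$Y$ storage convention to the $\omega X(w)Z(u)$ form, which the paper simply assumes, and your stated $\Omega(4^n)$ output-size lower bound are correct, harmless additions.
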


\begin{proof}
Let $g(0),g(1),\dots,g(2^n - 1)$ be the binary reflected Gray code from Fact~\ref{fact:gray}. For each $m$,
let
\[
x^{(m)} = g(m) \in \F_2^n,
\qquad
c^{(m)} = C|x^{(m)}\rangle.
\]
We prove by induction on $m$ that after the $m$-th write of the algorithm, the current variables
satisfy
\[
x = x^{(m)},
\qquad
c = c^{(m)},
\]
and the column written at that moment is exactly the $x^{(m)}$-th column of the dense matrix of
$C$.

\textbf{Base case $m=0$.} The algorithm sets $x=0$ and $c=c_0 = C|0^n\rangle$, then writes $c$ as column
$0$ of $M$. Since $g(0)=0$, the claim holds.

\textbf{Induction step.} Assume the claim holds after the $(m-1)$-st write, where
$1 \le m \le 2^n - 1$. By Fact~\ref{fact:gray},
\[
x^{(m)} = x^{(m-1)} \oplus \epsilon_t
\]
for the unique coordinate $t$ whose one-hot word is
\[
f_m = g(m) \oplus g(m-1) = m\ \&\ (-m).
\]
Because the current algorithmic variable is $x = x^{(m-1)}$, the update
\[
x \leftarrow x \oplus f_m
\]
sets $x$ to $x^{(m)}$.

Also, by construction $\lookup{P}[f_m] = V_t$. By the induction hypothesis the current dense
column is $c = c^{(m-1)}$. Lemma~\ref{lemma:column_recurrence} therefore gives
\[
V_t c^{(m-1)} = c_{x^{(m-1)} \oplus \epsilon_t} = c_{x^{(m)}} = c^{(m)}.
\]
Algorithm~\ref{alg:pauli_to_vec} computes this new dense column in time $O(2^n)$, so after the update the variable
$c$ is exactly $c^{(m)}$. The subsequent write therefore stores the correct column of the dense
matrix.

Fact~\ref{fact:gray}(i) says that the Gray code visits every $x \in \F_2^n$ exactly once. Hence every column
of the dense matrix is written exactly once, and the returned array is the full matrix of $C$.

For the runtime, initializing the one-hot lookup table costs $O(2^n)$ time. Initializing the output
matrix costs $O(4^n)$ time and space. The Gray loop has $2^n - 1$ iterations. In each iteration,
Algorithm~\ref{alg:pauli_to_vec} is invoked once, which costs $O(2^n)$ by Theorem~\ref{thm:pauli_to_stab_correctness_and_complexity}, and the resulting column is
written into the output matrix, which is also $O(2^n)$. Thus the total loop cost is
\[
(2^n - 1) \cdot O(2^n) = O(4^n).
\]
Including the initialization costs, the total running time is $O(4^n)$.
\end{proof}

\begin{corollary}[From a standard Clifford tableau to a full dense matrix]
\label{cor:cliff_materialization}
A full dense Clifford matrix can be materialized from a standard tableau description in time
$O(4^n)$.
\end{corollary}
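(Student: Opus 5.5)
The plan is to reduce Corollary~\ref{cor:cliff_materialization} to Theorem~\ref{thm:cliff_with_first_col}. That theorem already turns the tableau together with the first column $c_0=C|0^n\rangle$ into the full matrix in $O(4^n)$ time. So the only missing ingredient is to compute $c_0$ from the tableau within the same budget, and in fact within $O(2^n)$ time.

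\textbf{Step 1: identify $c_0$ as a stabilizer state.} The state $|0^n\rangle$ is the unique common $+1$ eigenvector of $Z_1,\dots,Z_n$. Hence $c_0=C|0^n\rangle$ is the unique common $+1$ eigenvector of
\[
CZ_tC^\dagger=U_t \qquad (1\le t\le n),
\]
because $U_t c_0 = CZ_tC^\dagger C|0^n\rangle = CZ_t|0^n\rangle = c_0$. The operators $U_1,\dots,U_n$ are Hermitian, pairwise commuting and independent, since conjugation by $C$ preserves these properties of the $Z_t$. Their signs are $\pm1$, because $U_t^2=I$ forces a Hermitian Pauli operator to carry a real phase.

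\textbf{Step 2: read off the check matrix.} Reading the $X$- and $Z$-patterns and sign bits of $U_1,\dots,U_n$ directly from the tableau gives a check matrix $(H,\sigma)$ as in Definition~\ref{def:check-matrix}. The one point to check is the bookkeeping of factors of $\ii$ when a tableau entry is stored as $\omega X(w)Z(u)$ rather than in the $P(w_j,u_j)$ convention with $Y$. Converting between the two costs $O(n)$ per row, since $XZ=-\ii Y$ on each qubit with $w_j=u_j=1$.

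\textbf{Step 3: build $c_0$.} Applying Corollary~\ref{cor:check-matrix} materializes a dense representative of $c_0$ in $O(n^3)+O(2^n)=O(2^n)$ time. Any representative is acceptable, because by the Remark the tableau fixes $C$ only up to a global phase. Lemma~\ref{lemma:column_recurrence} then generates every other column consistently from this chosen $c_0$, so the output is a valid representative $e^{\ii\theta}C$ of the projective Clifford.

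\textbf{Step 4: assemble.} Feeding the tableau and $c_0$ into Algorithm~\ref{alg:cliff_mat}, Theorem~\ref{thm:cliff_with_first_col} gives total time $O(2^n)+O(4^n)=O(4^n)$. The trivial $\Omega(4^n)$ output lower bound shows this is optimal.

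The main obstacle I expect is not the complexity count but the phase consistency in Steps~2 and~3. The stabilizer sign convention must be correctly translated, so that the computed vector really lies in the $+1$ eigenspace of every $U_t$ and not some other sign pattern. One must also argue that normalization is irrelevant: Algorithm~\ref{alg:pauli_to_vec} acts linearly and unitarily, so a scalar error in $c_0$ propagates uniformly to every column. If the check-matrix conversion of \cite{de2025fast} returns an unnormalized $\gamma$, I would fix it as $\gamma=2^{-k/2}$, which yields a unitary output matrix.
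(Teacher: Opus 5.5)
Your proposal is correct and follows the paper's proof. You show $U_t c_0 = c_0$, so that $U_1,\dots,U_n$ form a check matrix for $c_0$; you materialize $c_0$ in $O(2^n)$ time via Corollary~\ref{cor:check-matrix}; and you invoke Theorem~\ref{thm:cliff_with_first_col} for the remaining $O(4^n)$ work. You also fill in bookkeeping the paper leaves implicit: the $O(n)$-per-row sign translation from the stored $\omega X(w)Z(u)$ form to the $P(w_j,u_j)$ convention, and the observation that any global-phase choice of $c_0$ propagates consistently through Lemma~\ref{lemma:column_recurrence}.
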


\begin{proof}
Let $C$ be the Clifford operator represented by the tableau, and let
\[
c_0 = C|0^n\rangle
\]
be its first dense column. For each $t$,
\[
U_t c_0 = C Z_t C^\dagger C |0^n\rangle = C Z_t |0^n\rangle = C|0^n\rangle = c_0.
\]
Thus $c_0$ is stabilized by $U_1,\dots,U_n$. Because these operators are conjugates of the
independent commuting Hermitian Pauli operators $Z_1,\dots,Z_n$, they form a standard
stabilizer check matrix description of $c_0$. By Corollary~\ref{cor:check-matrix}, the vector $c_0$ can therefore be
materialized in time $O(2^n)$.

Applying Theorem~\ref{thm:cliff_with_first_col} to the tableau and the first column then yields the full dense matrix in
additional time $O(4^n)$, so the total complexity is $O(4^n)$.
\end{proof}

\begin{remark}[Implication for dense-Clifford verification]
De Silva et al.~\cite{de2025fast} also give an $O(n2^n)$ routine for extracting a Clifford tableau from a dense Clifford matrix. The verification pipeline reconstructs the dense matrix from the extracted tableau and compares against the input. Replacing that reconstruction step by
Corollary~\ref{cor:cliff_materialization} lowers the overall verification pipeline to $O(4^n)$ on the same word-RAM model.
\end{remark}

\section{Improved conversion from check matrices to quadratic forms}
\label{sec:fast-check-to-qf}

When the input is a stabilizer check matrix, the materialization pipeline has two stages:
\[
\text{check matrix}
\;\longrightarrow\;
\text{quadratic form}
\;\longrightarrow\;
\text{full state vector}.
\]
The second stage dominates because it has an unavoidable $\Omega(2^n)$ cost. The first stage, by contrast, is only polynomial-time preprocessing.
Consequently, even the previously known $O(n^3)$ reduction is asymptotically hidden by the dense materialization cost, and improving this polynomial step is negligible for the overall complexity.

Nevertheless, the compact-to-compact conversion is algorithmically meaningful in its own
right. In workflows that keep stabilizer states in compact representation and do not
materialize the full vector, the polynomial conversion cost is no longer negligible. For this reason, we also develop a faster reduction from check matrices to quadratic forms. Specifically, we replace the Gaussian elimination preprocessing by a \emph{sign-aware} Four Russians algorithm, while keeping the sign convention of Definition~\ref{def:check-matrix}.

\subsection{A sign convention compatible with the check matrix}

The check matrix in Definition~\ref{def:check-matrix} stores literal tensor products of $P(0,0)=I$, $P(1,0)=X$, $P(0,1)=Z$, and $P(1,1)=Y$.
For fast row operations it is convenient to use a \emph{canonical} binary Pauli row sign. For
$u,w\in\F_2^n$, define
\[
N(u,w)=\sum_{j=1}^n u_jw_j\in\Z,
\qquad
\langle u,w\rangle_2=N(u,w)\bmod 2\in\F_2,
\]
and
\[
\tau(u,w)=\frac{N(u,w)(N(u,w)+1)}{2}\bmod 2\in\F_2.
\]
Also define $X(w)$ and $Z(u)$ as in Eq.~\eqref{eq:Xw_and_Zu_definition}.
Then
\begin{equation}
\bigotimes_{j=1}^n P(w_j,u_j)
=
\ii^{N(u,w)}X(w)Z(u)
=
(-1)^{\tau(u,w)}(-\ii)^{\langle u,w\rangle_2}X(w)Z(u).
\label{eq:standard-to-canonical-pauli}
\end{equation}
Thus the generator row $(w\mid u)$ with standard sign bit $\sigma$ represents the same operator as the canonical row
\[
(w,u,\chi),
\qquad
\chi=\sigma\xor\tau(u,w),
\]
where
\begin{equation}
\mathcal P(w,u,\chi)=(-1)^\chi(-\ii)^{\langle u,w\rangle_2}X(w)Z(u).
\label{eq:canonical-pauli-row}
\end{equation}
This conversion only changes the bookkeeping notation, not the stabilizer generators themselves.

For every computational basis label $x\in\F_2^n$, the action of a canonical row is
\begin{equation}
\mathcal P(w,u,\chi)\ket{x}
=
(-1)^{\chi+\langle u,x\rangle_2}(-\ii)^{\langle u,w\rangle_2}\ket{x\xor w}.
\label{eq:canonical-row-action}
\end{equation}
All additions in exponents of $(-1)$ are in $\F_2$.

\begin{lemma}[Sign-aware product of commuting canonical rows]
\label{lem:safr-product}
Suppose the two canonical rows $(w,u,\chi)$ and $(w',u',\chi')$ commute, equivalently
\[
\langle u,w'\rangle_2\xor\langle u',w\rangle_2=0.
\]
Put
\[
a=\langle u,w\rangle_2,
\qquad
b=\langle u',w'\rangle_2,
\qquad
r=\langle u,w'\rangle_2.
\]
Then
\begin{equation}
\label{eq:canonical-sign-update}
\mathcal P(w,u,\chi)\mathcal P(w',u',\chi')
=
\mathcal P\bigl(w\xor w',\,u\xor u',\,\chi\xor\chi'\xor r\xor ab\bigr).
\end{equation}
\end{lemma}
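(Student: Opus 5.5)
The plan is a direct computation in the unsigned Pauli basis $X(w)Z(u)$. I will move every scalar factor to the front, reduce the operator part to $X(w\xor w')Z(u\xor u')$, and then check that the accumulated scalar equals the one that Eq.~\eqref{eq:canonical-pauli-row} assigns to the row $(w\xor w',u\xor u',\chi'')$. Here $\chi''=\chi\xor\chi'\xor r\xor ab$.

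\textbf{Step 1: the operator part.} From Eq.~\eqref{eq:canonical-pauli-row},
\[
\mathcal P(w,u,\chi)\mathcal P(w',u',\chi')=(-1)^{\chi+\chi'}(-\ii)^{a+b}\,X(w)Z(u)X(w')Z(u').
\]
Moving $Z(u)$ past $X(w')$ uses the qubitwise relation $ZX=-XZ$, which gives $Z(u)X(w')=(-1)^{\langle u,w'\rangle_2}X(w')Z(u)$. Since $X$'s and $Z$'s each multiply by XOR of their patterns, the product becomes
\[
(-1)^{\chi+\chi'+r}(-\ii)^{a+b}X(w\xor w')Z(u\xor u').
\]
This sign can also be read off from the action formula Eq.~\eqref{eq:canonical-row-action} on a basis state $\ket{x}$, which serves as a cross-check.

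\textbf{Step 2: the imaginary exponent of the product row.} The canonical form of the product row requires the factor $(-\ii)^{c}$ with $c=\langle u\xor u',w\xor w'\rangle_2$. By bilinearity over $\F_2$,
\[
c=a\xor b\xor\langle u,w'\rangle_2\xor\langle u',w\rangle_2.
\]
The commutation hypothesis says the last two terms cancel, so $c=a\xor b$.

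\textbf{Step 3: reconciling the exponents.} This is the only nontrivial point: the product carries $(-\ii)^{a+b}$ with an integer exponent, while the canonical form carries $(-\ii)^{a\xor b}$. For bits, $a+b=(a\xor b)+2ab$ as integers. Since $(-\ii)^2=-1$, this gives $(-\ii)^{a+b}=(-1)^{ab}(-\ii)^{a\xor b}$. Substituting into Step 1, the total sign exponent is $\chi\xor\chi'\xor r\xor ab$. Together with Step 2, this yields exactly Eq.~\eqref{eq:canonical-sign-update}.

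I expect no real obstacle beyond this bookkeeping of integer versus $\F_2$ exponents.
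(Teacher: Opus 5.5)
Your proposal is correct and follows essentially the same route as the paper. You commute $Z(u)$ past $X(w')$ to pick up $(-1)^r$, use bilinearity together with the commutation hypothesis to get $\langle u\xor u',w\xor w'\rangle_2=a\xor b$, and then rewrite $(-\ii)^{a+b}$ as $(-1)^{ab}(-\ii)^{a\xor b}$ to absorb the extra $ab$ into the sign bit.
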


\begin{proof}
Using $Z(u)X(w')=(-1)^{\langle u,w'\rangle_2}X(w')Z(u)$, we have
\[
X(w)Z(u)X(w')Z(u')
=
(-1)^rX(w\xor w')Z(u\xor u').
\]
The commutation assumption gives $\langle u',w\rangle_2=r$, and hence
\[
\langle u\xor u',w\xor w'\rangle_2
=a\xor b\xor r\xor r=a\xor b.
\]
For bits $a,b$,
\[
(-\ii)^a(-\ii)^b=(-1)^{ab}(-\ii)^{a\xor b}.
\]
Combining these identities gives the sign update in Eq.~\eqref{eq:canonical-sign-update}.
\end{proof}

\subsection{Sign-aware block extraction of the support directions}

The first task is to separate the rows whose $X$ parts span the support directions from the pure $Z$ constraints that cut out the affine support. Ordinary binary row addition is not enough because the sign bit is nonlinear under Pauli multiplication. Lemma~\ref{lem:safr-product} is the required sign update.

\begin{algorithm}[t]
\caption{}
\label{alg:safr-extract}
\begin{algorithmic}[1]
\Require Commuting independent canonical rows $(w_r,u_r,\chi_r)$, $1\le r\le n$, and a block size $b$
\Ensure Equivalent ordered rows whose first $k$ rows have independent $X$ parts and whose remaining rows have zero $X$ part
\State Partition the $n$ $X$-coordinates into consecutive blocks $J_1,J_2,\dots,J_t$, each of size at most $b$
\For{each $r=1,\dots,n$}
    \State Create a mutable row record $R_r\gets(w_r,u_r,\chi_r)$
\EndFor
\State $\mathcal A\gets[R_1,\dots,R_n]$ (the list of all active rows)
\State $\mathcal F\gets[]$ (the empty frozen list)
\For{each block $J_a$ in order}
    \State Read the current block pattern $w[J_a]$ of every active row $(w,u,\chi)\in\mathcal A$

    \State
    $
    W_a:=\operatorname{span}_{\mathbb F_2}\{w_R[J_a]:R\in\mathcal A\}
    \subseteq \mathbb F_2^{|J_a|}
    $
    \State Choose active rows $S_1,\dots,S_h\in\mathcal A$ such that
    $w_{S_1}[J_a],\dots,w_{S_h}[J_a]$ form a basis of $W_a$
    \State\Comment{$h=\dim W_a\le |J_a|\le b$}

    \For{$i=1$ to $h$}
        \State $b_i\gets w_{S_i}[J_a]\in\mathbb F_2^{|J_a|}$
    \EndFor
    \State $T[0]\gets(0,0,0)$
    \State $x\gets 0$
    \State $P\gets(0,0,0)$
    \For{$m=1$ to $2^h-1$}
        \State $f\gets m\mathbin{\&}(-m)$
        \State Let $i$ be the unique index such that $f=2^{i-1}$
        \State $x\gets x\oplus b_i$
        \State Replace $P$ by the canonical product $P\,S_i$ using Lemma~\ref{lem:safr-product}
        \State $T[x]\gets P$
    \EndFor

    \For{each active row $R\in\mathcal A$ not among $S_1,\dots,S_h$}
        \State $x\gets w_R[J_a]$
        \State Replace $R$ by the canonical product $R\,T[x]$ using Lemma~\ref{lem:safr-product}
    \EndFor
    \State Move $S_1,\dots,S_h$ from $\mathcal A$ to the end of $\mathcal F$
\EndFor
\State \Return the rows in $\mathcal F$, followed by the remaining rows in $\mathcal A$
\end{algorithmic}
\end{algorithm}

\begin{lemma}[Output structure of block extraction]
\label{lem:safr-extract-structure}
Algorithm~\ref{alg:safr-extract} returns rows of the form
\begin{equation}
(v_1,u_1,\chi_1),\dots,(v_k,u_k,\chi_k),(0,\rho_1,\zeta_1),\dots,(0,\rho_{n-k},\zeta_{n-k}),
\label{eq:safr-extracted-form}
\end{equation}
where $v_1,\dots,v_k$ are linearly independent and span the same subspace of $\F_2^n$ as the original $X$ parts. The generated stabilizer group is unchanged.
\end{lemma}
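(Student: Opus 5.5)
We argue by an invariant maintained across the outer loop over blocks $J_1,\dots,J_t$. Throughout, let $\mathcal G$ be the group generated by the current rows in $\mathcal F\cup\mathcal A$, let $X(\cdot)$ denote the span of $X$ parts, and write $J_{\le a}=J_1\cup\dots\cup J_a$. The invariant after processing block $J_a$ has four parts. (i) $\mathcal G$ equals the original stabilizer group, and the rows remain pairwise commuting and independent. (ii) The span of the $X$ parts of all rows equals the original span. (iii) Every active row has $w[J_{\le a}]=0$. (iv) The frozen rows, listed in order, have linearly independent $X$ parts.

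For (i), each replacement $R\gets R\,T[x]$ multiplies $R$ by a product of the rows $S_i$, which are distinct from $R$. Such a replacement is an invertible transformation of the generating list, so it preserves the generated group. It also preserves commutativity and independence, since the operators are elements of an abelian group and the change is a unimodular change of generators. By Lemma~\ref{lem:safr-product}, which applies because all operators commute, the canonical triple computed for the product is exactly the product operator, so the recorded signs stay correct. The table entries $T[x]$ are correct products by induction along the Gray walk: each step multiplies the running product $P$ by one $S_i$, and the $X$-block of $P$ tracks $x$ because $X$ parts add under multiplication. The Gray walk visits every subset of $\{S_i\}$, and since $b_1,\dots,b_h$ is a basis, $T[x]$ is defined for every $x\in W_a$.

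For (iii), any active row $R$ that is not a pivot has $w_R[J_a]\in W_a$. Hence $T[w_R[J_a]]$ exists, and the new $X$ part is $w_R\oplus w_{T}$, whose $J_a$-block vanishes. The earlier blocks $J_{<a}$ stay zero because all active rows, including every $S_i$, already vanish there. For (ii), the $X$ parts change only by adding $X$ parts of other rows in the list, so their span is unchanged. For (iv), the newly frozen $S_1,\dots,S_h$ have independent $J_a$-blocks and vanish on $J_{<a}$. Each earlier frozen group has independent restrictions on its own block, and all later-frozen rows vanish on that block. Stacking these groups gives a block echelon structure, which yields independence of all frozen $X$ parts.

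At termination (iii) gives zero $X$ parts for all remaining active rows, so they have the form $(0,\rho_j,\zeta_j)$. The frozen rows give $v_1,\dots,v_k$, which are independent by (iv). Their span equals the full $X$-span by (ii), since the active rows contribute nothing. The total number of rows is still $n$, which gives the form \eqref{eq:safr-extracted-form}.

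The main obstacle is the sign bookkeeping. We must justify applying Lemma~\ref{lem:safr-product} to intermediate products such as $P\,S_i$ and $R\,T[x]$. This reduces to the observation that all of these are elements of the abelian group $\mathcal G$ and hence pairwise commute.
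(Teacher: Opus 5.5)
Your proposal is correct and follows essentially the same route as the paper's proof. Both arguments clear the active $X$ parts block by block, preserve the group because each replacement multiplies a generator by a product of other commuting generators, prove independence of the frozen $X$ parts by the block-echelon (earliest-block) argument, and preserve the $X$-span. Your extra checks, namely that the Gray-walk table entry $T[x]$ is the product of the selected rows and that Lemma~\ref{lem:safr-product} applies because every intermediate product lies in the abelian stabilizer group, only spell out what the paper states in a single line.
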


\begin{proof}
At the beginning of a block, every active row has zero $X$ entries in all earlier blocks. During the block, the selected rows have block patterns spanning all active block patterns. If an unselected active row has current pattern $x$, the table row $T[x]$ has the same current pattern. Multiplying by $T[x]$ therefore changes that block pattern to $x\xor x=0$. The selected rows also have zero $X$ entries in earlier blocks, so this operation cannot reintroduce a nonzero entry in an earlier block. By induction over the blocks, all rows still active at the end have zero $X$ part.

Every replacement is multiplication of one generator by a product of other commuting generators, with the exact sign computed by Lemma~\ref{lem:safr-product}. Such a replacement preserves the generated group, because the old generator can be recovered by multiplying the new generator by the same product of unchanged generators. Moving rows between the active and frozen lists only reorders the generators.

The frozen rows are the selected rows. In the block in which they are selected, their current block patterns are independent after all earlier active entries have been cleared. Looking at the earliest block that contributes to a putative linear dependence among frozen $X$ parts forces all coefficients of rows selected in that block to be zero. Repeating block by block proves independence. Since every nonselected active $X$ part is replaced by itself plus a combination of selected $X$ parts, the span of all $X$ parts is preserved throughout. At termination the active $X$ parts are zero, so the frozen $X$ parts span the original $X$-span.
\end{proof}

Let
\[
V=\operatorname{span}\{v_1,\dots,v_k\}\subseteq\F_2^n.
\]
Because all final rows commute, each pure-$Z$ vector $\rho_j$ is orthogonal to every $v_i$. Since row operations preserve independence of the full $2n$-bit row pairs, the vectors $\rho_1,\dots,\rho_{n-k}$ are independent. Therefore they form a basis of $V^\perp$.

\subsection{The support equations}

A pure-$Z$ row $(0,\rho_j,\zeta_j)$ acts on $\ket{x}$ by the scalar
$(-1)^{\zeta_j+\langle\rho_j,x\rangle_2}$. Hence every label $h$ in the support of the stabilizer state must satisfy
\begin{equation}
\langle\rho_j,h\rangle_2=\zeta_j,
\qquad
1\le j\le n-k.
\label{eq:safr-support-equations}
\end{equation}
Conversely, if $h$ satisfies Eq.~\eqref{eq:safr-support-equations}, then every vector in the affine space $h+V$ satisfies the same equations because $\rho_j\in V^\perp$.

The system in Eq.~\eqref{eq:safr-support-equations} is consistent for a valid check matrix. The final rows generate the same stabilizer group as the input rows, and a nonzero stabilized state has at least one computational basis label with nonzero amplitude. That label satisfies all pure-$Z$ equations. We find one solution $h$ by ordinary block Gaussian elimination on the augmented matrix with rows $(\rho_j\mid \zeta_j)$.

\subsection{Derivation of the phase formulas}

The final phase formulas must be written in the exact convention of Definition~\ref{def:qf}, where the exponent $d^\top y$ of $\ii$ is the ordinary integer sum of the selected $d_i$'s, interpreted modulo $4$. This convention differs slightly from a presentation in which the exponent is a single $\F_2$-linear form. The formulas below are the translated formulas in our notation.

For $y\in\F_2^k$, write
\[
z(y)=\sum_{j=1}^k y_jv_j\in V.
\]
A candidate output state has the form
\begin{equation}
\ket{\psi}
=
2^{-k/2}\sum_{y\in\F_2^k}\ii^{E(y)}\ket{h\xor z(y)},
\label{eq:safr-candidate-state}
\end{equation}
where
\begin{equation}
E(y)=\sum_{j=1}^k d_jy_j+2\sum_{1\le a\le b\le k}J_{ab}y_ay_b\pmod 4.
\label{eq:safr-E-def}
\end{equation}
This is exactly Eq.~\eqref{eq:qf-description} with $\gamma=2^{-k/2}$.

Define the dot products
\begin{equation}
D_{ij}=\langle u_i,v_j\rangle_2,
\qquad
s_i=\langle u_i,h\rangle_2.
\label{eq:safr-D-and-s}
\end{equation}
Commutation of the top rows gives
\begin{equation}
D_{ij}=D_{ji}
\qquad(1\le i,j\le k),
\label{eq:safr-D-symmetric}
\end{equation}
because $\langle u_i,v_j\rangle_2\xor\langle u_j,v_i\rangle_2=0$.

Fix a top row $(v_i,u_i,\chi_i)$ and put $a_i=D_{ii}$. The coefficient of $\ket{h\xor z(y)}$ in $\mathcal P(v_i,u_i,\chi_i)\ket{\psi}$ comes from the summand indexed by $y\xor e_i$. By Eq.~\eqref{eq:canonical-row-action}, this coefficient equals
\[
2^{-k/2}\ii^{E(y\xor e_i)}
(-1)^{\chi_i+\langle u_i,h\xor z(y)\xor v_i\rangle_2}
(-\ii)^{a_i}.
\]

For the row to stabilize the candidate state, this must equal
$2^{-k/2}\ii^{E(y)}$ for every $y$. Equivalently,
\begin{equation}
E(y\xor e_i)-E(y)
\equiv
D_{ii}+2\left(\chi_i\xor s_i\xor D_{ii}\xor\sum_{j=1}^kD_{ij}y_j\right)
\pmod 4.
\label{eq:safr-phase-difference-required}
\end{equation}
Here the expression inside the parentheses is a bit in $\F_2$, while the outer factor $2$ is taken in $\Z_4$.

On the other hand, the definition in Eq.~\eqref{eq:safr-E-def} gives
\begin{equation}
E(y\xor e_i)-E(y)
\equiv
(1-2y_i)d_i+2J_{ii}
+2\sum_{j<i}J_{ji}y_j
+2\sum_{i<j}J_{ij}y_j
\pmod 4.
\label{eq:safr-phase-difference-from-J}
\end{equation}

Since $(1-2y_i)d_i\equiv d_i+2d_iy_i\pmod 4$, comparing Eqs.~\eqref{eq:safr-phase-difference-required} and \eqref{eq:safr-phase-difference-from-J} for all $y$ is achieved by setting
\begin{equation}
d_i=D_{ii},
\label{eq:safr-d-formula}
\end{equation}
\begin{equation}
J_{ii}=\chi_i\xor D_{ii}\xor s_i,
\label{eq:safr-Jdiag-formula}
\end{equation}
\begin{equation}
J_{ij}=D_{ij}
\qquad(1\le i<j\le k).
\label{eq:safr-Joff-formula}
\end{equation}
Indeed, the term $d_i=D_{ii}$ supplies the required $D_{ii}$ constant and the required $2D_{ii}y_i$ term, the diagonal entry $J_{ii}$ supplies the remaining constant $2(\chi_i\xor s_i\xor D_{ii})$, and the off-diagonal entries supply the terms $2D_{ij}y_j$ for $j\ne i$. Equation~\eqref{eq:safr-D-symmetric} ensures that the same upper-triangular entry works whether $j<i$ or $i<j$.

\begin{remark}
If one writes the phase instead as $(-1)^{Q(y)}\ii^{\ell(y)}$ with a single $\F_2$-linear value $\ell(y)=\sum_i D_{ii}y_i\bmod 2$, then the off-diagonal sign matrix contains the term $D_{ii}D_{jj}$. Definition~\ref{def:qf} uses $\ii^{d^\top y}$ with the ordinary integer sum $d^\top y$. Since
$
\ii^{\sum_i d_iy_i\bmod 2}
=
\ii^{\sum_i d_iy_i}(-1)^{\sum_{i<j}d_id_jy_iy_j},
$
that extra $D_{ii}D_{jj}$ term is absorbed by the integer $\ii$-exponent convention. Thus the compatible formula is the simpler off-diagonal rule in Eq.~\eqref{eq:safr-Joff-formula}.
\end{remark}

\subsection{The full conversion algorithm}

The remaining operations are standard block routines over $\F_2$.
\begin{itemize}[leftmargin=1.5em]
\item \textsc{BlockSolve} applies the Four-Russians block elimination technique \cite{albrecht2012m4rie} to the ordinary augmented linear system in Eq.~\eqref{eq:safr-support-equations}, with table entries storing XOR sums of binary rows rather than Pauli products.
\item \textsc{BlockedDots} computes all values in Eq.~\eqref{eq:safr-D-and-s} by partitioning the $n$ coordinates into blocks of size at most $b$. For each coordinate block $J$, it builds the table
$
T_J[x,t]=\langle x,a_t[J]\rangle_2
$
for every $x\in\F_2^{|J|}$ and every target $a_t\in\{v_1,\dots,v_k,h\}$, then XORs the table values for $x=u_i[J]$ into the running dot products.
\end{itemize}

\begin{algorithm}[t]
\caption{}
\label{alg:safr-sp2sq}
\begin{algorithmic}[1]
\Require A valid check matrix $(H,\sigma)$ as in Definition~\ref{def:check-matrix}
\Ensure Quadratic form $(h,v_1,\dots,v_k,d,J,\gamma)$ as in Definition~\ref{def:qf}
\State $b\gets \max\{1,\lfloor \log_2 n\rfloor-1\}$
\State $J \gets 0$
\For{$r=1$ to $n$}
    \State Read the $r$-th row of $H$ as $\left(w^{(r)}\mid u^{(r)}\right)$
    \State $N_r\gets \sum_{j=1}^n u^{(r)}_j w^{(r)}_j\in\Z$
    \State $\chi_r\gets \sigma_r\xor\left(\frac{N_r(N_r+1)}2\bmod 2\right)$
\EndFor
\State Run Algorithm~\ref{alg:safr-extract} on the canonical rows $\left(w^{(r)},u^{(r)},\chi_r\right)$ with block size $b$
\State\Comment{Let the output rows be ordered as in Eq.~\eqref{eq:safr-extracted-form}}
\State Use \textsc{BlockSolve} to find any $h\in\F_2^n$ satisfying $\langle\rho_j,h\rangle_2=\zeta_j$ for $1\le j\le n-k$
\State Use \textsc{BlockedDots} to compute $D_{ij}=\langle u_i,v_j\rangle_2$ and $s_i=\langle u_i,h\rangle_2$
\For{$i=1$ to $k$}
    \State $d_i\gets D_{ii}$
    \State $J_{ii}\gets \chi_i\xor D_{ii}\xor s_i$
\EndFor
\For{$1\le i<j\le k$}
    \State $J_{ij}\gets D_{ij}$
\EndFor
\State $\gamma\gets 2^{-k/2}$
\State \Return $(h,v_1,\dots,v_k,d,J,\gamma)$
\end{algorithmic}
\end{algorithm}

\subsection{Correctness and complexity}

\begin{theorem}
\label{thm:safr-correctness}
Algorithm~\ref{alg:safr-sp2sq} returns a quadratic form description of the stabilizer state specified by the input check matrix $(H,\sigma)$.
\end{theorem}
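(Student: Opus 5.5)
The plan is to show that the candidate state $\ket{\psi}$ of Eq.~\eqref{eq:safr-candidate-state}, built from the returned data, is a nonzero common $+1$ eigenvector of the final rows produced by Algorithm~\ref{alg:safr-extract}. By Lemma~\ref{lem:safr-extract-structure} these rows generate the same stabilizer group as the input generators $G_1,\dots,G_n$. Uniqueness of the stabilized state then identifies $\ket{\psi}$ with the encoded state up to the projective convention of the Remark.

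\textbf{Step 1: bookkeeping.} By Eq.~\eqref{eq:standard-to-canonical-pauli}, each canonical row $(w^{(r)},u^{(r)},\chi_r)$ with $\chi_r=\sigma_r\xor\tau(u^{(r)},w^{(r)})$ equals $G_r$ as an operator. Lemma~\ref{lem:safr-product} shows that every product formed in Algorithm~\ref{alg:safr-extract}, both in building the table entries $T[x]$ and in the row replacements, is computed with its exact sign. So Lemma~\ref{lem:safr-extract-structure} applies, and the output rows have the form of Eq.~\eqref{eq:safr-extracted-form} and generate the original group.

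\textbf{Step 2: the support and the pure-$Z$ rows.} As argued before Eq.~\eqref{eq:safr-support-equations}, the $\rho_j$ form a basis of $V^\perp$, and the system is consistent, so \textsc{BlockSolve} (ordinary elimination over $\F_2$) returns some $h$. Every label $h\xor z(y)$ then satisfies $\langle\rho_j,h\xor z(y)\rangle_2=\zeta_j$. Hence each row $(0,\rho_j,\zeta_j)$ acts on $\ket{h\xor z(y)}$ by $(-1)^{\zeta_j+\zeta_j}=1$ and fixes $\ket{\psi}$ termwise.

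\textbf{Step 3: the top rows.} This is the main obstacle. Fix $i\le k$. By Eq.~\eqref{eq:canonical-row-action} and the reindexing $y\mapsto y\xor e_i$, invariance of $\ket{\psi}$ under row $i$ is equivalent to the phase-difference identity Eq.~\eqref{eq:safr-phase-difference-required} holding for all $y$. I will verify this carefully. Expanding $\langle u_i,h\xor z(y)\xor v_i\rangle_2=s_i\xor\sum_j D_{ij}y_j\xor D_{ii}$ turns $(-1)^{\cdots}(-\ii)^{D_{ii}}$ into $\ii$ raised to $-D_{ii}+2(\chi_i\xor s_i\xor D_{ii}\xor\sum_j D_{ij}y_j)$. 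Since $-D_{ii}\equiv D_{ii}+2D_{ii}$ modulo $4$, a parity shift in the bracket must be tracked consistently; I will recheck the exact constant against Eq.~\eqref{eq:safr-phase-difference-required}. Then I substitute $d_i=D_{ii}$, $J_{ii}=\chi_i\xor D_{ii}\xor s_i$ and $J_{ij}=D_{ij}$ into Eq.~\eqref{eq:safr-phase-difference-from-J}. Two facts are needed here: $(1-2y_i)d_i\equiv d_i+2d_iy_i$, and $D_{ij}=D_{ji}$ from Eq.~\eqref{eq:safr-D-symmetric}. The latter gives $2\sum_{j<i}J_{ji}y_j+2\sum_{j>i}J_{ij}y_j=2\sum_{j\neq i}D_{ij}y_j$. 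Adding $2D_{ii}y_i$ from the linear term completes $2\sum_j D_{ij}y_j$. Also, $2$ times an integer sum equals $2$ times its parity modulo $4$, so the two sides match. I must also confirm that \textsc{BlockedDots} computes $D_{ij}$ and $s_i$ correctly, which is immediate since dot products decompose as XORs over coordinate blocks.

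\textbf{Step 4: conclusion.} $\ket{\psi}$ is nonzero and has norm $1$ with $\gamma=2^{-k/2}$, because the $v_i$ are independent and so the $2^k$ labels are distinct. It is fixed by all $n$ generators of the stabilizer group. By uniqueness in Definition~\ref{def:check-matrix}, $\ket{\psi}$ is the encoded state, and the returned tuple satisfies Definition~\ref{def:qf}, with $J$ upper triangular by construction.
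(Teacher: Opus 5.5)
Your proposal is correct and follows the paper's proof step for step: group preservation via Eq.~\eqref{eq:standard-to-canonical-pauli} and Lemmas~\ref{lem:safr-product}--\ref{lem:safr-extract-structure}, pure-$Z$ rows fixing $h+V$ termwise, the phase-difference identity for the top rows, and then normalization plus uniqueness. The constant you left to recheck needs no shift: the row factor is $\ii^{-D_{ii}+2\beta}$ with $\beta=\chi_i\xor s_i\xor D_{ii}\xor\sum_j D_{ij}y_j$, and invariance demands $\ii^{E(y\xor e_i)}\,\ii^{-D_{ii}+2\beta}=\ii^{E(y)}$, so $E(y\xor e_i)-E(y)\equiv D_{ii}-2\beta\equiv D_{ii}+2\beta \pmod 4$, which is exactly Eq.~\eqref{eq:safr-phase-difference-required}; inserting an extra $D_{ii}$ into the bracket would instead give $J_{ii}=\chi_i\xor s_i$, which for the single generator $Y$ with $h=0$ produces the $-1$ eigenstate.
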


\begin{proof}
Equation~\eqref{eq:standard-to-canonical-pauli} shows that the initial conversion from $\sigma_r$ to $\chi_r$ preserves every input generator. Lemma~\ref{lem:safr-product} computes exact products of commuting canonical rows, and Lemma~\ref{lem:safr-extract-structure} shows that the extraction stage preserves the generated stabilizer group and outputs rows in the form in Eq.~\eqref{eq:safr-extracted-form}.

Let $V=\operatorname{span}\{v_1,\dots,v_k\}$. The pure-$Z$ vectors $\rho_1,\dots,\rho_{n-k}$ form a basis of $V^\perp$, and $h$ satisfies Eq.~\eqref{eq:safr-support-equations}. Therefore every pure-$Z$ row fixes every basis vector in the affine support $h+V$, and hence fixes the output state.

For each top row $(v_i,u_i,\chi_i)$, the derivation above shows that Eqs.~\eqref{eq:safr-d-formula}--\eqref{eq:safr-Joff-formula} make the required phase difference identity in Eq.~\eqref{eq:safr-phase-difference-required} hold for every $y$. Thus every top row stabilizes the output state. Consequently every final generator stabilizes the output state, and because the final generators generate the same stabilizer group as the input generators, every input generator stabilizes it as well.

The output state is normalized because it has $2^k$ nonzero amplitudes, all of magnitude $2^{-k/2}$. The input check matrix is valid, so its common $+1$ eigenspace is one-dimensional. Hence the normalized quadratic form state returned by the algorithm is exactly the stabilizer state specified by $(H,\sigma)$.
\end{proof}

\begin{theorem}
\label{thm:safr-complexity}
On a valid $n$-qubit check matrix, Algorithm~\ref{alg:safr-sp2sq} runs in
$
O(n^3 / \log n)
$
bit operations and uses $O(n^2)$ bits of space.
\end{theorem}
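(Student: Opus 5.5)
The plan is to account for each stage of Algorithm~\ref{alg:safr-sp2sq} separately in bit operations, with $b=\Theta(\log n)$ chosen so that $2^b\le n/2$. The key consequence of this choice is that every table built within a block has at most $2^b=O(n)$ entries. Each entry is a row of $O(n)$ bits, so a table costs $O(n2^b)=O(n^2)$ bits of space, and only $O(1)$ tables are alive at any time.

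The preprocessing loop computing $N_r$ and $\chi_r$ costs $O(n)$ per row, hence $O(n^2)$ in total. This is dominated by the rest.

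The main term is Algorithm~\ref{alg:safr-extract}, which makes $t=\lceil n/b\rceil$ block passes.

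\textbf{Finding a basis of $W_a$.} I restrict the active rows to their $b$ block columns. This gives an $n\times b$ matrix, and a pivoted elimination on it takes $O(nb^2)$ bit operations. Summed over the $t$ blocks this is $O(n^2 b)$, which is negligible.

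\textbf{Building the table $T$.} I walk a Gray code over $2^h\le 2^b$ entries, and each entry is produced from the previous one by one canonical product (Lemma~\ref{lem:safr-product}). The subtle point is that a product needs the bits $a$, $b$ and $r$, and computing these as inner products costs $O(n)$ each. This is fine: it gives $O(n)$ per product and $O(n2^b)=O(n^2)$ per block.

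\textbf{Updating the rows.} Each of the at most $n$ unselected active rows is multiplied by a single table entry, at cost $O(n)$, so $O(n^2)$ per block. The inner products needed here are again $O(n)$ each, namely $a$ for $R$, the stored $a$-value for $T[x]$, and $r$.

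Over all blocks this stage costs $O(t\cdot n^2)=O(n^3/\log n)$. To keep the table step within this bound, I store alongside each table entry its self-inner-product bit $\langle u,w\rangle_2$. Even without this bookkeeping, every cost above is $O(n)$ per product, so the bound holds.

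\textbf{Block solve.} \textsc{BlockSolve} is standard Four-Russians (M4RI-style) elimination on an $(n-k)\times(n+1)$ system. It runs in $O(n^3/\log n)$ with the same block size, by the same per-block accounting: elimination within the block, a table of size $2^b$, and one XOR per row.

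\textbf{Blocked dot products.} For each of the $n/b$ blocks $J$, \textsc{BlockedDots} builds $T_J[x,\cdot]$ for all $x\in\F_2^{|J|}$ as a $(k+1)$-bit word. It does this incrementally in Gray order, with one XOR of the transposed column word per step, at cost $O(2^b n)=O(n^2)$ per block. Then each of the $k$ rows $u_i$ XORs in one $(k+1)$-bit word, which is $O(n)$ per row and $O(n^2)$ per block. The total is $O(n^3/\log n)$.

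The step that needs a little care here is the transposition. The column words $(a_t[j])_t$ for $j\in J$ must be available, so I transpose the $(k+1)\times n$ matrix of targets once, at cost $O(n^2)$. The final loops filling $d$ and $J$ cost $O(k^2)$.

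\textbf{Space.} The space bound is $O(n^2)$. The row records, the transposed matrix, $D$, and $J$ each take $O(n^2)$ bits, and each table takes $O(2^b n)=O(n^2)$ bits; tables are reused across blocks.

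\textbf{Main obstacle.} I expect the main obstacle to be confirming that the sign-aware products add no hidden factor beyond $O(n)$ per product. Lemma~\ref{lem:safr-product} needs three inner products per multiplication, and each is an $O(n)$ AND-popcount-parity over bits. Since every product in the algorithm is charged $O(n)$ regardless, the per-block cost remains $O(n^2)$, and the total is $O(n^3/\log n)$.
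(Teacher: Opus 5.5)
Your proposal is correct and follows essentially the same stage-by-stage accounting as the paper. In both, each block of the sign-aware extraction costs $O(nb^2)+O(2^b n)+O(n^2)$ bit operations, using $O(n)$-cost canonical products and $2^b\le n/2$. The Four-Russians solve gets the same bound, the blocked dot products use Gray-ordered lookup tables, and the space bound comes from $O(2^b n)=O(n^2)$-bit tables. Your explicit treatment of the $O(n^2)$ sign-conversion preprocessing and the one-time transposition of the target matrix only fills in details the paper leaves implicit.
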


\begin{proof}
For $n<4$ one may use any direct cubic reduction, so assume $n\ge4$ and let
$b=\lfloor\log_2 n\rfloor-1$. Then $b=\Theta(\log n)$ and $2^b\le n/2$.

A canonical row has $2n+1=O(n)$ bits. One sign-aware row product costs $O(n)$ bit operations. It XORs the two $X$ parts and the two $Z$ parts and computes the constant number of length-$n$ inner products appearing in Lemma~\ref{lem:safr-product}. In one coordinate block, selecting the current block basis costs $O(nb^2)$ bit operations, building the table of at most $2^b$ selected row products costs $O(2^bn)$, and clearing all unselected active rows costs $O(n^2)$. There are $O(n/b)$ blocks, so the extraction stage costs
$
O\!\left(\frac{n^3}{b}+\frac{n^2 2^b}{b}+n^2b\right).
$
The ordinary block solve for Eq.~\eqref{eq:safr-support-equations} has the same bound.

For the dot products, there are $O(n/b)$ coordinate blocks. For one block, the lookup table for the $k+1$ targets $v_1,\dots,v_k,h$ costs $O(k2^b)$ bit operations to build in Gray code order, and the table lookups for all pairs $(u_i,a_t)$ cost $O(k^2)$. Since $k\le n$, the total dot product cost is
$
O\!\left(\frac{k^2n}{b}+\frac{kn2^b}{b}\right)
\le
O\!\left(\frac{n^3}{b}+\frac{n^2 2^b}{b}\right).
$
The final formulas for $d$ and $J$ cost $O(k^2)\le O(n^2)$.

Substituting $b=\Theta(\log n)$ and $2^b\le n/2$ gives $O(n^3/\log n)$ bit operations overall. The matrix itself uses $O(n^2)$ bits. The largest block table in extraction or solving uses $O(2^bn)=O(n^2)$ bits, and the dot product table uses $O(2^bk)=O(n^2)$ bits. Thus the total space is $O(n^2)$ bits.
\end{proof}

\section{Discussion}

In many applications, a compact stabilizer or Clifford description is only the internal representation used to generate or manipulate an object, while the final interface requires explicit amplitudes or matrix entries.
The intrinsic cost of leaving the
stabilizer formalism is the size of the
dense object being requested, which entails \(2^n\) amplitudes for a state
and \(4^n\) entries for a unitary. We have shown that compact descriptions
of stabilizer states and Clifford transformations can be expanded at exactly
these rates, without any additional
polynomial overhead.
For a fixed odd prime qudit dimension, the same logic also gives optimal dense materialization of qudit stabilizer state vectors.

\paragraph{A structural interpretation.}
The off-diagonal part of the quadratic phase can be viewed as a graph on the $k$ support coordinates. The parity word $By$ is then the vector of neighborhood parities of the currently active vertex set $y$. Flipping one Gray code bit toggles exactly one vertex, which toggles the neighborhood parity of exactly its adjacent vertices. That is why the whole update is one XOR with one adjacency column. In this form, the algorithm is conceptually simple, consisting of a Gray code traversal and the dynamic maintenance of neighborhood parity.

This output-optimal bridge is useful whenever stabilizer objects are used as
ingredients inside calculations that are not themselves purely
stabilizer calculations. In such workflows the
dense representation is often required only at the boundary of the
calculation. The results here show that, once that boundary is reached, the
conversion itself is \emph{not} an additional asymptotic bottleneck.

The broad applicability of our approach suggests two natural directions for future work:
\begin{itemize}
\item \textbf{Other explicit output conversions in the stabilizer formalism.} Whenever an existing routine uses Gray code to traverse the support and still scans a linear number of support coordinates per step, the same technique may remove that factor.
\item \textbf{Fermionic analogs.} Related compact phase representations also occur in Majorana stabilizer settings \cite{bravyi2005lagrangian, vijay2015majorana, litinski2018quantum, jiang2019majorana, beguvsic2026phase}. It is plausible that analogous word-parallel invariants exist there as well, although the details require separate analysis.
\end{itemize}

\bibliographystyle{unsrt}
\bibliography{references}

@article{aaronson2004improved,
  title={Improved simulation of stabilizer circuits},
  author={Aaronson, Scott and Gottesman, Daniel},
  journal={Physical Review A},
  volume={70},
  number={5},
  pages={052328},
  year={2004},
  publisher={APS}
}

@article{dehaene2003clifford,
  title={Clifford group, stabilizer states, and linear and quadratic operations over {GF}(2)},
  author={Dehaene, Jeroen and De Moor, Bart},
  journal={Physical Review A},
  volume={68},
  number={4},
  pages={042318},
  year={2003},
  publisher={APS}
}

@article{gidney2021stim,
  title={Stim: a fast stabilizer circuit simulator},
  author={Gidney, Craig},
  journal={Quantum},
  volume={5},
  pages={497},
  year={2021},
  publisher={Verein zur F{\"o}rderung des Open Access Publizierens in den Quantenwissenschaften}
}

@article{de2025fast,
  title={Fast algorithms for classical specifications of stabiliser states and {C}lifford gates},
  author={de Silva, Nadish and Salmon, Wilfred and Yin, Ming},
  journal={Quantum},
  volume={9},
  pages={1586},
  year={2025},
  publisher={Verein zur F{\"o}rderung des Open Access Publizierens in den Quantenwissenschaften}
}

@article{albrecht2011efficient,
  title={{Efficient dense Gaussian elimination over the finite field with two elements}},
  author={Albrecht, Martin R and Bard, Gregory V and Pernet, Cl{\'e}ment},
  journal={arXiv preprint arXiv:1111.6549},
  year={2011}
}

@article{bard2006accelerating,
  title={{Accelerating Cryptanalysis with the Method of Four Russians}},
  author={Bard, Gregory V},
  journal={Cryptology ePrint Archive},
  year={2006}
}

@inproceedings{arlazarov1970economical,
  title={On economical construction of the transitive closure of a directed graph},
  author={Arlazarov, Vladimir L and Dinic, EA and Kronrod, MA and Faradzev, IA},
  booktitle={Dokl. Akad. Nauk SSSR},
  volume={194},
  number={11},
  pages={1209--1210},
  year={1970}
}

@article{albrecht2008efficient,
  title={{Efficient Multiplication of Dense Matrices over GF(2)}},
  author={Albrecht, Martin and Bard, Gregory and Hart, William},
  journal={arXiv preprint arXiv:0811.1714},
  year={2008}
}

@article{gusfield1997algorithms,
  title={{Algorithms on strings, trees, and sequences: Computer science and computational biology}},
  author={Gusfield, Dan},
  journal={ACM SIGACT News},
  volume={28},
  number={4},
  pages={41--60},
  year={1997},
  publisher={ACM New York, NY, USA}
}

@article{bard2008matrix,
  title={{Matrix inversion (or LUP-factorization) via the Method of Four Russians, in $\Theta(n^3/\log n)$ time}},
  author={Bard, Gregory},
  journal={LMS J. Comput. Math},
  volume={1},
  pages={14},
  year={2008}
}

@article{labib2022stabilizer,
  title={{Stabilizer rank and higher-order Fourier analysis}},
  author={Labib, Farrokh},
  journal={Quantum},
  volume={6},
  pages={645},
  year={2022},
  publisher={Verein zur F{\"o}rderung des Open Access Publizierens in den Quantenwissenschaften}
}

@article{raussendorf2001one,
  title={A one-way quantum computer},
  author={Raussendorf, Robert and Briegel, Hans J},
  journal={Physical Review Letters},
  volume={86},
  number={22},
  pages={5188},
  year={2001},
  publisher={APS}
}

@article{hein2004multiparty,
  title={Multiparty entanglement in graph states},
  author={Hein, Marc and Eisert, Jens and Briegel, Hans J},
  journal={Physical Review A},
  volume={69},
  number={6},
  pages={062311},
  year={2004},
  publisher={APS}
}

@article{van2004graphical,
  title={{Graphical description of the action of local Clifford transformations on graph states}},
  author={Van den Nest, Maarten and Dehaene, Jeroen and De Moor, Bart},
  journal={Physical Review A},
  volume={69},
  number={2},
  pages={022316},
  year={2004},
  publisher={APS}
}

@article{schlingemann2002stabilizer,
author = {Schlingemann, D.},
title = {Stabilizer codes can be realized as graph codes},
year = {2002},
publisher = {Rinton Press, Incorporated},
address = {Paramus, NJ},
volume = {2},
number = {4},
issn = {1533-7146},
journal = {Quantum Information and Computation},
pages = {307–323}
}

@article{schlingemann2001quantum,
  title={Quantum error-correcting codes associated with graphs},
  author={Schlingemann, Dirk and Werner, Reinhard F},
  journal={Physical Review A},
  volume={65},
  number={1},
  pages={012308},
  year={2001},
  publisher={APS}
}

@article{calderbank1998quantum,
  title={{Quantum error correction via codes over GF(4)}},
  author={Calderbank, A Robert and Rains, Eric M and Shor, Peter W and Sloane, Neil JA},
  journal={IEEE Transactions on Information Theory},
  volume={44},
  number={4},
  pages={1369--1387},
  year={1998},
  publisher={IEEE}
}

@article{shor1995scheme,
  title={Scheme for reducing decoherence in quantum computer memory},
  author={Shor, Peter W},
  journal={Physical Review A},
  volume={52},
  number={4},
  pages={R2493},
  year={1995},
  publisher={APS}
}

@article{steane1996error,
  title={Error correcting codes in quantum theory},
  author={Steane, Andrew M},
  journal={Physical Review Letters},
  volume={77},
  number={5},
  pages={793},
  year={1996},
  publisher={APS}
}

@article{cleve1997efficient,
  title={Efficient computations of encodings for quantum error correction},
  author={Cleve, Richard and Gottesman, Daniel},
  journal={Physical Review A},
  volume={56},
  number={1},
  pages={76},
  year={1997},
  publisher={APS}
}

@article{gottesman1998heisenberg,
  title={{The Heisenberg representation of quantum computers}},
  author={Gottesman, Daniel},
  journal={arXiv preprint quant-ph/9807006},
  year={1998}
}

@article{anders2006fast,
  title={Fast simulation of stabilizer circuits using a graph-state representation},
  author={Anders, Simon and Briegel, Hans J},
  journal={Physical Review A},
  volume={73},
  number={2},
  pages={022334},
  year={2006},
  publisher={APS}
}

@article{nest2010classical,
author = {Van den Nest, Maarten},
title = {{Classical simulation of quantum computation, the Gottesman-Knill theorem, and slightly beyond}},
year = {2010},
publisher = {Rinton Press, Incorporated},
address = {Paramus, NJ},
volume = {10},
number = {3},
issn = {1533-7146},
journal = {Quantum Information and Computation},
pages = {258–271}
}

@article{wietek2025xdiag,
  title={{XDiag: Exact diagonalization for quantum many-body systems}},
  author={Wietek, Alexander and Staszewski, Luke and Ulaga, Martin and Ebert, Paul L and Karlsson, Hannes and Sarkar, Siddhartha and Shackleton, Henry and Sinha, Aritra and Soares, Rafael D},
  journal={arXiv preprint arXiv:2505.02901},
  year={2025}
}

@article{gilchrist2005distance,
  title={{Distance measures to compare real and ideal quantum processes}},
  author={Gilchrist, Alexei and Langford, Nathan K and Nielsen, Michael A},
  journal={Physical Review A},
  volume={71},
  number={6},
  pages={062310},
  year={2005},
  publisher={APS}
}

@article{chuang1997prescription,
  title={{Prescription for experimental determination of the dynamics of a quantum black box}},
  author={Chuang, Isaac L and Nielsen, Michael A},
  journal={Journal of Modern Optics},
  volume={44},
  number={11-12},
  pages={2455--2467},
  year={1997},
  publisher={Taylor \& Francis}
}

@article{poyatos1997complete,
  title={{Complete characterization of a quantum process: the two-bit quantum gate}},
  author={Poyatos, JF and Cirac, J Ignacio and Zoller, Peter},
  journal={Physical Review Letters},
  volume={78},
  number={2},
  pages={390},
  year={1997},
  publisher={APS}
}

@article{nielsen2002simple,
  title={{A simple formula for the average gate fidelity of a quantum dynamical operation}},
  author={Nielsen, Michael A},
  journal={Physics Letters A},
  volume={303},
  number={4},
  pages={249--252},
  year={2002},
  publisher={Elsevier}
}

@article{greenbaum2015introduction,
  title={{Introduction to quantum gate set tomography}},
  author={Greenbaum, Daniel},
  journal={arXiv preprint arXiv:1509.02921},
  year={2015}
}

@article{bravyi2016improved,
  title={{Improved classical simulation of quantum circuits dominated by Clifford gates}},
  author={Bravyi, Sergey and Gosset, David},
  journal={Physical Review Letters},
  volume={116},
  number={25},
  pages={250501},
  year={2016},
  publisher={APS}
}

@article{bravyi2016trading,
  title={{Trading classical and quantum computational resources}},
  author={Bravyi, Sergey and Smith, Graeme and Smolin, John A},
  journal={Physical Review X},
  volume={6},
  number={2},
  pages={021043},
  year={2016},
  publisher={APS}
}

@article{bravyi2019simulation,
  title={{Simulation of quantum circuits by low-rank stabilizer decompositions}},
  author={Bravyi, Sergey and Browne, Dan and Calpin, Padraic and Campbell, Earl and Gosset, David and Howard, Mark},
  journal={Quantum},
  volume={3},
  pages={181},
  year={2019},
  publisher={Verein zur F{\"o}rderung des Open Access Publizierens in den Quantenwissenschaften}
}

@article{suzuki2021qulacs,
  title={{Qulacs: a fast and versatile quantum circuit simulator for research purpose}},
  author={Suzuki, Yasunari and Kawase, Yoshiaki and Masumura, Yuya and Hiraga, Yuria and Nakadai, Masahiro and Chen, Jiabao and Nakanishi, Ken M and Mitarai, Kosuke and Imai, Ryosuke and Tamiya, Shiro and others},
  journal={Quantum},
  volume={5},
  pages={559},
  year={2021},
  publisher={Verein zur F{\"o}rderung des Open Access Publizierens in den Quantenwissenschaften}
}

@article{smelyanskiy2016qhipster,
  title={{qHiPSTER: The quantum high performance software testing environment}},
  author={Smelyanskiy, Mikhail and Sawaya, Nicolas PD and Aspuru-Guzik, Al{\'a}n},
  journal={arXiv preprint arXiv:1601.07195},
  year={2016}
}

@article{guerreschi2020intel,
  title={{Intel Quantum Simulator: A cloud-ready high-performance simulator of quantum circuits}},
  author={Guerreschi, Gian Giacomo and Hogaboam, Justin and Baruffa, Fabio and Sawaya, Nicolas PD},
  journal={Quantum Science and Technology},
  volume={5},
  number={3},
  pages={034007},
  year={2020},
  publisher={IOP Publishing}
}

@article{magesan2011scalable,
  title={{Scalable and robust randomized benchmarking of quantum processes}},
  author={Magesan, Easwar and Gambetta, Jay M and Emerson, Joseph},
  journal={Physical Review Letters},
  volume={106},
  number={18},
  pages={180504},
  year={2011},
  publisher={APS}
}

@article{dankert2012exact,
  title={{Exact and approximate unitary 2-designs: constructions and applications}},
  author={Dankert, Christoph and Cleve, Richard and Emerson, Joseph and Livine, Etera},
  journal={arXiv preprint quant-ph/0606161},
  year={2012}
}

@article{emerson2007symmetrized,
  title={{Symmetrized characterization of noisy quantum processes}},
  author={Emerson, Joseph and Silva, Marcus and Moussa, Osama and Ryan, Colm and Laforest, Martin and Baugh, Jonathan and Cory, David G and Laflamme, Raymond},
  journal={Science},
  volume={317},
  number={5846},
  pages={1893--1896},
  year={2007},
  publisher={American Association for the Advancement of Science}
}

@inproceedings{hagerup1998sorting,
  title={{Sorting and searching on the word RAM}},
  author={Hagerup, Torben},
  booktitle={Annual Symposium on Theoretical Aspects of Computer Science},
  pages={366--398},
  year={1998},
  organization={Springer}
}

@article{koenig2014efficiently,
  title={{How to efficiently select an arbitrary Clifford group element}},
  author={Koenig, Robert and Smolin, John A},
  journal={Journal of Mathematical Physics},
  volume={55},
  number={12},
  year={2014},
  publisher={AIP Publishing}
}

@article{selinger2015generators,
  title={{Generators and relations for n-qubit Clifford operators}},
  author={Selinger, Peter},
  journal={Logical Methods in Computer Science},
  volume={11},
  year={2015},
  publisher={Episciences. org}
}

@article{bravyi2021hadamard,
  title={{Hadamard-free circuits expose the structure of the Clifford group}},
  author={Bravyi, Sergey and Maslov, Dmitri},
  journal={IEEE Transactions on Information Theory},
  volume={67},
  number={7},
  pages={4546--4563},
  year={2021},
  publisher={IEEE}
}

@article{savage1997survey,
  title={{A survey of combinatorial Gray codes}},
  author={Savage, Carla},
  journal={SIAM Review},
  volume={39},
  number={4},
  pages={605--629},
  year={1997},
  publisher={SIAM}
}

@article{gilbert1958gray,
  title={{Gray codes and paths on the $n$-cube}},
  author={Gilbert, Edgar N},
  journal={The Bell System Technical Journal},
  volume={37},
  number={3},
  pages={815--826},
  year={1958},
  publisher={Nokia Bell Labs}
}

@article{hostens2005stabilizer,
  title={{Stabilizer states and Clifford operations for systems of arbitrary dimensions and modular arithmetic}},
  author={Hostens, Erik and Dehaene, Jeroen and De Moor, Bart},
  journal={Physical Review A},
  volume={71},
  number={4},
  pages={042315},
  year={2005},
  publisher={APS}
}

@article{beaudrap2013linearized,
author = {de Beaudrap, Niel},
title = {{A linearized stabilizer formalism for systems of finite dimension}},
year = {2013},
publisher = {Rinton Press, Incorporated},
address = {Paramus, NJ},
volume = {13},
number = {1–2},
issn = {1533-7146},
journal = {Quantum Information and Computation},
pages = {73–115}
}

@inproceedings{gottesman1998fault,
  title={{Fault-tolerant quantum computation with higher-dimensional systems}},
  author={Gottesman, Daniel},
  booktitle={NASA International Conference on Quantum Computing and Quantum Communications},
  pages={302--313},
  year={1998},
  organization={Springer}
}

@article{gross2006hudson,
  title={{Hudson's theorem for finite-dimensional quantum systems}},
  author={Gross, David},
  journal={Journal of Mathematical Physics},
  volume={47},
  number={12},
  year={2006},
  publisher={AIP Publishing}
}

@article{mutze2022combinatorial,
  title={{Combinatorial Gray codes---an updated survey}},
  author={M{\"u}tze, Torsten},
  journal={arXiv preprint arXiv:2202.01280},
  year={2022}
}

@article{farinholt2014ideal,
  title={{An ideal characterization of the Clifford operators}},
  author={Farinholt, JM},
  journal={Journal of Physics A: Mathematical and Theoretical},
  volume={47},
  number={30},
  pages={305303},
  year={2014},
  publisher={IOP Publishing}
}

@inproceedings{albrecht2012m4rie,
  title={{The M4RIE library for dense linear algebra over small fields with even characteristic}},
  author={Albrecht, Martin R},
  booktitle={Proceedings of the 37th International Symposium on Symbolic and Algebraic Computation},
  pages={28--34},
  year={2012}
}

@article{beguvsic2026phase,
  title={{Phase-sensitive representation of Majorana stabilizer states}},
  author={Begu{\v{s}}i{\'c}, Tomislav and Chan, Garnet Kin},
  journal={arXiv preprint arXiv:2602.17604},
  year={2026}
}

@article{jiang2019majorana,
  title={{Majorana loop stabilizer codes for error mitigation in fermionic quantum simulations}},
  author={Jiang, Zhang and McClean, Jarrod and Babbush, Ryan and Neven, Hartmut},
  journal={Physical Review Applied},
  volume={12},
  number={6},
  pages={064041},
  year={2019},
  publisher={APS}
}

@article{vijay2015majorana,
  title={{Majorana fermion surface code for universal quantum computation}},
  author={Vijay, Sagar and Hsieh, Timothy H and Fu, Liang},
  journal={Physical Review X},
  volume={5},
  number={4},
  pages={041038},
  year={2015},
  publisher={APS}
}

@article{litinski2018quantum,
  title={{Quantum computing with Majorana fermion codes}},
  author={Litinski, Daniel and von Oppen, Felix},
  journal={Physical Review B},
  volume={97},
  number={20},
  pages={205404},
  year={2018},
  publisher={APS}
}

@article{bravyi2005lagrangian,
title = {{Lagrangian representation for fermionic linear optics}},
author = {Bravyi, Sergey},
year = {2005},
publisher = {Rinton Press, Incorporated},
address = {Paramus, NJ},
volume = {5},
number = {3},
issn = {1533-7146},
journal = {Quantum Information and Computation},
pages = {216–238}
}

@article{javadi2024quantum,
  title={{Quantum computing with Qiskit}},
  author={Javadi-Abhari, Ali and Treinish, Matthew and Krsulich, Kevin and Wood, Christopher J and Lishman, Jake and Gacon, Julien and Martiel, Simon and Nation, Paul D and Bishop, Lev S and Cross, Andrew W and others},
  journal={arXiv preprint arXiv:2405.08810},
  year={2024}
}

\end{document}